\documentclass[11pt]{article}
\usepackage[a4paper,margin=30mm]{geometry}
\usepackage{amsmath,amssymb,amsthm,mathtools}
\usepackage[T1]{fontenc}
\usepackage{lmodern}
\usepackage{microtype}
\usepackage[hidelinks]{hyperref}
\usepackage{enumitem}
\usepackage{authblk}
\usepackage{graphicx} 
\usepackage{float} 
\usepackage{subcaption}
\usepackage{placeins}

\newtheorem{theorem}{Theorem}[section]
\newtheorem{proposition}[theorem]{Proposition}
\newtheorem{lemma}[theorem]{Lemma}
\newtheorem{corollary}[theorem]{Corollary}
\theoremstyle{definition}

\theoremstyle{remark}
\newtheorem{remark}[theorem]{Remark}

\title{Towards Digital Halftoning on Closed Manifolds--An Error Diffusion Scheme for the $2D$ Torus based on Sigma-Delta Quantization along the Rank-one Lattice}

\author[1,2]{Felix Krahmer}
\author[2]{Alessandro Lupoli}

\affil[1]{\small Department of Electrical Engineering and Information Technology, Technische Universit\"at Darmstadt}
\affil[2]{\small Department of Mathematics, Technical University of Munich}
\date{}

\begin{document}
\maketitle

\begin{abstract}
Digital halftoning aims to represent continuous-tone images by binary patterns while preserving their visually relevant low-frequency content. Among the many available approaches, error-diffusion methods implement noise shaping through causal feedback filters and can be interpreted as two-dimensional versions of the signal quantization paradigm Sigma--Delta modulation. On closed domains, however, the terminal state of the underlying recurrence relation need not match the initial one, producing boundary artifacts. We study this problem for bandlimited functions on the two-dimensional torus. By arranging all pixels along a single closed rank-one lattice, we replace the multiple mismatches associated with separately processed rows and columns with a single terminal contribution, while retaining exact reconstruction. For a uniform lattice with \(N=M^2+1\) points, we obtain first- and second-order error bounds of order \(N^{-1/2}\) and \(N^{-1}\). A suitable constant update eliminates the terminal mismatch and reduces the spatial localization of the error without changing these asymptotic orders. For fixed-direction rank-one lattices, the corrected first- and second-order reconstructions instead achieve rates \(N^{-1}\) and \(N^{-2}\). Numerical experiments illustrate a reduction in boundary artifacts compared with classical schemes applied on the Cartesian grid.
\end{abstract}

\section{Introduction}
\label{introduction}
Digital halftoning is the process of reproducing a continuous-tone image by means of
a binary pattern of black and white pixels. Since the output contains no intermediate
intensity values, gray levels cannot be represented pointwise; instead, they are
created by the local density and spatial arrangement of the dots. This mechanism
relies on the fact that the human visual system acts, to first approximation, as a
low-pass filter and therefore averages sufficiently fine spatial variations. As a
consequence, the relevant objective is not to make the representation error small at
every pixel, but to control its low-frequency component,  as this is sufficient for visual similarity.
As the coarseness of the binary representation prevents the total error from being small, this goal needs to be achieved by the so-called noise shaping paradigm:
the visually significant part of the
error is reduced by moving most of its energy towards higher spatial frequencies,
where it is attenuated by the visual system.

A number of algorithms explicitly build upon the idea of noise shaping. Specifically,  error-diffusion algorithms (see, for example, \cite{wang2009halftone,kite2000modeling,damera2001adaptive,krahmer2025mathematics,KrahmerVeselovska2023}) implement this principle through a recursive feedback mechanism. The
discrepancy produced at one pixel is propagated to subsequent pixels and influences
their binary values, so that nearby errors tend to cancel after spatial averaging.
This feedback structure is exactly analogous to the analog-to-digital (A/D) conversion scheme Sigma-Delta ($\Sigma\Delta$) modulation \cite{InoseYasuda1963}, which is why error diffusion has been interpreted in the literature as two-dimensional $\Sigma\Delta$
\cite{KiteEvansBovikSculley1997}. In this formulation, the sampled gray-scale values of the image take the role of the
input to the quantizer, while the halftoned image corresponds to the $1$-bit quantized representation, with each bit encoding whether the corresponding pixel is black or white.

That said, most of the algorithmic and theoretical developments of $\Sigma\Delta$ modulation have been achieved in the one-dimensional case. The main idea in $1D$ is that an ordered sequence of samples is processed recursively and an internal state variable records the accumulated quantization error. Beginning with the work of Inose and
Yasuda \cite{InoseYasuda1963}, this approach has become a standard method for
coarse analog-to-digital conversion, while its mathematical analysis for bandlimited
functions was initiated by Daubechies and DeVore
\cite{DaubechiesDeVore2003} and subsequently developed to obtain exponentially accurate one-bit schemes; see, for example,
\cite{Gunturk2003,DeiftGunturkKrahmer2011}. A notable exception to the focus on $1D$ is the work \cite{KrahmerVeselovska2023}, with its refinement [Sampta paper], which proves error guarantees for weighted combinations of $\Sigma\Delta$ schemes applied to solve the $2D$ problem. 

The underlying geometry in both $1D$ and $2D$ is, for all these papers, the standard Euclidean geometry, potentially restricted to a subset of $\mathbb{R}$ or $\mathbb{R}^2$. However, motivated by surface printing, the digital halftoning problem is also interesting for more general manifold domains. A challenge in this general framework arises when the manifold has a non-trivial topology and hence does not have a canonical identification with subsets of $\mathbb{R}$ or $\mathbb{R}^2$.

This issue is particularly relevant on a closed manifold, as there
is no distinguished boundary at which the recursion should naturally begin or end.
In $1D$, this problem has been studied in \cite{GrafKrahmerKrauseSolberg2023} with the focus on the simplest case where this issue arises, namely the circle. In order to apply a standard $\Sigma\Delta$ quantizer in this context, one needs to designate a starting point for the recurrence relation, which is then also the endpoint.
Consequently, one can encounter a mismatch of the state variable at this point, which translates into an additional local reconstruction error in the neighborhood of the endpoint. As shown in \cite{GrafKrahmerKrauseSolberg2023}, this is not just an artifact of the algorithm, but the error decay for a large oversampling rate is provably worse than in the Euclidean case.
This problem has also been investigated in the more general framework of finite frames; see \cite{wang2008sigma,bodmann2007frame,bodmann2007smooth,asano1999digital,benedetto2006second}. In this setting, eliminating the boundary contribution generally requires a non-translation-invariant dual: the reconstruction kernel has to be chosen depending on the sample location, which is not in line with the model commonly assumed for human vision.

As a remedy, \cite{GrafKrahmerKrauseSolberg2023,krause2018one} suggest not representing the original image but rather a perturbation of it obtained by adding a small constant. The motivation is that adding a small global constant is visually preferred over a local perturbation of the same magnitude. If the constant update is chosen carefully, the updated signal can be represented with provably better error decay.

In $2D$, a natural analog to the circle is the $2D$ torus. However, applying the $2D$ results of \cite{KrahmerVeselovska2023} requires not just a small number of initial conditions, but one needs to initialize the state variables at an entire vertical line and at an entire horizontal line, so the question of how to correct the boundary error is considerably more involved than in $1D$. 

In this paper, we address this issue by replacing the collection of row-wise orderings by a single trajectory containing all the sampling points. This allows us to then extend the program
initiated in \cite{GrafKrahmerKrauseSolberg2023}, where one-bit
$\Sigma\Delta$ quantization on the circle was considered as a first model for
error propagation on closed manifolds.

The idea of using a curve to impose a one-dimensional ordering on an image has already appeared in digital halftoning, notably through space-filling curves and related error-diffusion algorithms \cite{asano1999digital,velho1991digital}. In those works, the curve is primarily used as an algorithmic device for scanning the pixels and reducing directional artifacts. Here, instead, the trajectory is chosen so as to be compatible with both the topology of the torus and the Fourier reconstruction of the image.

Our trajectory of choice is a line on the torus, chosen such that the corresponding samples form a so-called rank-one lattice (see \cite{potts2021approximation,kuo2021function,gross2022sparse} for more details), which has recently played an important role in the construction of variants of the Fast Fourier Transform. 
More precisely, we sample the image on a rank-one lattice with $N=M^2+1$ points. As its nodes all lie on a line, they can be processed by one unchanged one-dimensional $\Sigma\Delta$ recursion. The particular choice $N=M^2+1$ also makes the lattice spatially uniform: its points are the centers of congruent rotated square pixels that partition the entire torus. At the same time, the lattice is reconstructing for the prescribed bandlimited space (see Section $2.1$ of \cite{potts2021approximation}), which leads to an exact equal-weight sampling formula. Thus, the same construction provides a closed ordering for quantization, a uniform pixelization of the torus, and an exact reconstruction operator.

Since all $N=M^2+1$ samples belong to a single trajectory, a boundary mismatch only occurs in a single point. In addition, this mismatch is small, as the size of this mismatch is inversely proportional to the sequence length. Even without any update, this favorable effect of the sequence length allows for first and second-order errors given by  
\[
O(M^{-1})=O(N^{-1/2})
\qquad\text{and}\qquad
O(M^{-2})=O(N^{-1}).
\]
After a constant correction, in the spirit of the $1D$ result, one can further improve this scaling.
However, this requires changing the discretization geometry and representing the image by rectangular pixels, rather than square ones. For that, we fix a line of large enough slope and increase the number of samples $N$ on that fixed trajectory. In this scenario, the reconstruction error scales as 
\[
O(N^{-1})
\qquad\text{and}\qquad
O(N^{-2}).
\]
for the first and second-order schemes, respectively. This gives a trade-off between spatially uniform pixels and faster asymptotic decay.


\section{Preliminaries}
\label{sec:preliminaries}

We begin by introducing the signal model and the main properties of
$\Sigma\Delta$ quantization that will be used throughout the paper. 
As a guiding example for a two-dimensional closed manifold domain, we consider the torus, so we study functions on the two-dimensional torus $\mathbb T^2$. 
Motivated by the fact that human vision can be modeled as a low-pass filter (as discussed in the introduction), we assume that the perceived images that we aim to approximate
are bandlimited on that domain.  This assumption
is directly aligned with A/D conversion theory, where it models the fact that different classes of signals occupy different frequency bands, and signal components outside the band of interest can be suppressed by a band-pass filter. To fully exploit this analogy, we represent the gray values by the symmetric interval $[-1,1]$, where $-1$ corresponds to black and $1$ to white.

Instead of $\mathbb T^2$, we work with $[0,2\pi)^2$ with opposite sides identified and regard functions on $\mathbb T^2$ as functions on $\mathbb R^2$ that are
$2\pi$-periodic in each variable.  For $f\in L^2(\mathbb T^2)$, we use the
normalization
\begin{equation}
\label{eq:fourier-coefficients-torus}
 \widehat f(k)
 =
 \frac{1}{(2\pi)^2}
 \int_{[0,2\pi)^2} f(t)e^{-ik\cdot t}\,dt,
 \qquad k\in\mathbb Z^2,
\end{equation}
where $k\cdot t=k_1t_1+k_2t_2$.  With this notation, the bandlimited images
considered in this paper belong to the following function space \begin{equation}
\label{eq:PWK-T2}
 P_K(\mathbb T^2)
 =
 \bigl\{ f\in L^2(\mathbb T^2):
 \operatorname{supp}(\widehat f)\subset[-K,K]^2\cap\mathbb Z^2
 \bigr\},
\end{equation}which can be interpreted as the periodic analog of the Paley-Wiener space

Equivalently, the signal model is the class of trigonometric polynomials with exponents contained in the square $[-K,K]^2$.  
It is well known that in this framework, an analog of Shannon's sampling theorem can be formulated using the Dirichlet kernel (Reference to book).
More precisely, let $N\ge 2K+1$, and let $y_{i,n}=f(t_{i,n})$ be the samples of
$f$ on the Cartesian grid
$t_{i,n}=(2\pi i/N,2\pi n/N)$, with $i,n=0,\ldots,N-1$.  Then $f$ can be
exactly recovered from these samples as
\begin{equation}
\label{eq:2d-sampling-prelim}
 f(t)
 =
 \frac{1}{N^2}
 \sum_{i,n=0}^{N-1}
 y_{i,n}\,\varphi_i^K(t_1)\varphi_n^K(t_2),
\end{equation}
where $\varphi_i^K(t)=\varphi^K(t-2\pi i/N)$, and $\varphi^K$ is the Dirichlet
kernel of bandwidth $K$,
\begin{equation}
\label{eq:phiK-prelim}
 \varphi^K(t)
 =
 \sum_{|m|\le K}e^{imt}
 =
 \frac{\sin\!\bigl((K+\tfrac12)t\bigr)}{\sin(t/2)}.
\end{equation}
Hence, before quantization, the finite array $(y_{i,n})$ fully represents the bandlimited function $f$.  This exact
sampling identity is the starting point for the quantization problem.

The goal is now to replace the gray values by black and white or, equivalently, the real-valued samples $y_{i,n}$ by quantized
values $q_{i,n}\in\{-1,1\}$ in such a way that $f$ is well approximated
by
\begin{equation}
\label{eq:quantized-reconstruction-prelim}
 f_q(t)
 =
 \frac{1}{N^2}
 \sum_{i,n=0}^{N-1}
 q_{i,n}\,\varphi_i^K(t_1)\varphi_n^K(t_2).
\end{equation}
That is, we aim to control the reconstruction error $e_q=f-f_q$, which can be expressed via \eqref{eq:quantized-reconstruction-prelim}
and \eqref{eq:2d-sampling-prelim} as
\begin{equation}
\label{eq:error-prelim}
 e_q(t)
 =
 \frac{1}{N^2}
 \sum_{i,n=0}^{N-1}
 (y_{i,n}-q_{i,n})\,\varphi_i^K(t_1)\varphi_n^K(t_2).
\end{equation}
Thus, the reconstruction error is obtained by applying the low-pass
reconstruction kernel to the quantization error $y-q$.  The aim is to choose
the quantized values so that $\|e_q\|_\infty$ becomes small as the
oversampling rate increases.

For such a coarse alphabet, just rounding the samples would not yield a meaningful representation; for example, all positive signals would be represented by the same array containing only the value $1$.  By contrast, $\Sigma\Delta$ quantization
chooses the quantized values in an interdependent way, in such a way that the quantization error is close to the null space of the low-pass reconstruction kernel. In other words, the error is shaped to be high-pass, which is why this paradigm is known as noise-shaping.

We briefly recall the one-dimensional framework underlying the analysis of
$\Sigma\Delta$ quantization, following
\cite{DaubechiesDeVore2003}. Let
$f\in PW_1(\mathbb R)$ and suppose that $f$ is sampled at rate
$\lambda\geq\lambda_0>1$. Then $f$ can be represented using a smoothed low-pass kernel given by a Schwartz function $\phi$ satisfying
\[
 \widehat{\phi}(\xi)=1
 \quad\text{for }|\xi|\leq\frac12,
 \qquad
 \widehat{\phi}(\xi)=0
 \quad\text{for }|\xi|\geq\frac{\lambda_0}{2},
\]
via the equality
\begin{equation}
\label{eq:one-dimensional-sampling}
 f(t)
 =
 \frac1\lambda
 \sum_{n\in\mathbb Z}
 y_n\phi\left(t-\frac{n}{\lambda}\right),
 \qquad
 y_n:=f\left(\frac{n}{\lambda}\right).
\end{equation}
Replacing the samples $y_n$ by one-bit values $q_n\in\{-1,1\}$ gives the
quantized reconstruction
\begin{equation}
\label{eq:one-dimensional-quantized-reconstruction}
 f_q(t)
 =
 \frac1\lambda
 \sum_{n\in\mathbb Z}
 q_n\phi\left(t-\frac{n}{\lambda}\right),
\end{equation}
and hence
\begin{equation}
\label{eq:one-dimensional-reconstruction-error}
 f(t)-f_q(t)
 =
 \frac1\lambda
 \sum_{n\in\mathbb Z}
 (y_n-q_n)\phi\left(t-\frac{n}{\lambda}\right).
\end{equation}
Thus, the reconstruction error is obtained by applying a low-pass filter to
the sample-wise quantization error $y-q$. The insight of  $\Sigma\Delta$ quantization is that if that error can be expressed as the finite difference of a bounded state variable, one can exploit the smoothness of the kernel to control the reconstruction error. More precisely, denoting by $\Delta$ the backward difference operator,
\[
 (\Delta u)_n=u_n-u_{n-1},
\]
an $r$-th order scheme is driven by a state sequence $u$ defined via the equality
\begin{equation}
\label{eq:standard-noise-shaping}
 y-q=\Delta^r u.
\end{equation}
Substitution into \eqref{eq:one-dimensional-reconstruction-error}, followed by
discrete summation by parts, transfers the differences from the state sequence
to the reconstruction kernel and yields the standard estimate
\begin{equation}
\label{eq:one-dimensional-sigma-delta-error}
 \|f-f_q\|_{L^\infty(\mathbb R)}
 \leq
 \lambda^{-r}
 \|\phi^{(r)}\|_{L^1(\mathbb R)}
 \|u\|_{\ell^\infty};
\end{equation}
see \cite{DaubechiesDeVore2003}. In particular, a
uniform bound on the state sequence $u$ leads to $r$-th order decay of the
reconstruction error as the oversampling rate increases.

A key ingredient for our method is the class of generalized $\Sigma\Delta$ schemes described by a causal feedback
filters $h=(h_j)_{j\geq0}\in\ell^1(\mathbb N_0)$ with $h_0=0$. Given an ordered
input sequence $y$, the quantization rule is
\begin{equation}
\label{eq:generalized-sigma-delta}
 \begin{aligned}
  q_n
  &=
  \operatorname{sign}\bigl((h*v)_n+y_n\bigr),\\
  v_n
  &=
  (h*v)_n+y_n-q_n,
 \end{aligned}
 \qquad
 (h*v)_n=\sum_{j\geq1}h_jv_{n-j},
\end{equation}
where $\operatorname{sign}(0)=1$ and $v$ denotes the internal state updated by
the recursion. Following \cite{Gunturk2003}, such a scheme yields a similar $r$-th order error decay if there exists $g\in\ell^1(\mathbb N_0)$ such that
\begin{equation}
\label{eq:generalized-order-condition}
 \delta_0-h=\Delta^r g.
\end{equation}
Indeed, the recursion gives
\[
 y-q=(\delta_0-h)*v,
\]
and therefore the reparametrized state $u:=g*v$ satisfies
\[
 y-q=\Delta^r u,
 \qquad
 \|u\|_{\ell^\infty}
 \leq
 \|g\|_{\ell^1}\|v\|_{\ell^\infty}.
\]
Combining this relation with
\eqref{eq:one-dimensional-sigma-delta-error} gives
\begin{equation}
\label{eq:generalized-reconstruction-error}
 \|f-f_q\|_{L^\infty(\mathbb R)}
 \leq
 \lambda^{-r}
 \|\phi^{(r)}\|_{L^1(\mathbb R)}
 \|g\|_{\ell^1}
 \|v\|_{\ell^\infty}.
\end{equation}
The reconstruction accuracy is therefore governed by two quantities: the size of the implemented state \(v\) and the norm \(\|g\|_{\ell^1}\). A scheme is said to be stable if \(v\) remains uniformly bounded, independently of the length of the input sequence. This property is essential for obtaining reconstruction guarantees, since \(\|v\|_{\ell^\infty}\) enters directly into the corresponding error estimate.

A standard sufficient condition for stability is
\begin{equation}
\label{eq:gunturk-stability-criterion}
\|h\|_{\ell^1}+\|y\|_{\ell^\infty}\leq 2.
\end{equation}
If the initial states lie in \([-1,1]\), this condition guarantees that
\(\|v\|_{\ell^\infty}\leq1\); see \cite{Gunturk2003}. 
It has been demonstrated empirically that optimizing the filters in view of conditions (14) and (15), gives rise to sparse filters $h$. This has motivated the study of filters of minimal support, namely, filters with the smallest number of nonzero coefficients compatible with the prescribed order; see \cite{Gunturk2003,deift2011optimal}. 
In the first-order case, the only such filter is given by \(h=(1)\), for the second order, these filters are given by 
\begin{equation}\label{second_order_minimal}
    h^{(k)}
 =
 \left(\frac{k+1}{k},0,\ldots,0,-\frac1k\right),
 \qquad k=1,2,\ldots,
\end{equation}

where \(h^{(k)}\) has length \(k+1\) and only two nonzero coefficients. Since

$$
 \|h^{(k)}\|_{\ell^1}=1+\frac{2}{k},
$$

the sufficient stability condition above is satisfied whenever

$$
 k\geq\frac{2}{1-\|y\|_{\ell^{\infty}}}.
$$

The choice of filter must therefore balance the stability requirement with the control of \(\|g\|_{\ell^1}\), which determines the constant in the reconstruction estimate. 

Minimally supported filters have been the key ingredient of the weighted $\Sigma\Delta$ schemes for $2$-dimensional signals proposed in \cite{KrahmerVeselovska2023}, and they will also play a key role in this paper.

\section{Boundary effects of Sigma-Delta schemes on manifolds}
\label{sec:rowwise-error}
In the $2$-dimensional torus, fixing one coordinate of a bivariate bandlimited function produces a one-dimensional bandlimited function of the remaining variable. Each row or column of the Cartesian sampling grid can therefore be regarded as the sampling of a bandlimited signal on the circle. Coarse quantization in such a scenario has been studied in \cite{GrafKrahmerKrauseSolberg2023}.

That work observes that applying $\Sigma\Delta$ in this context gives rise to boundary effects near the point where the scheme is initiated. These effects are not specific to $\Sigma\Delta$, but in general, unavoidable: the paper shows
that there exist bandlimited signals---in fact, even constant signals---for which the periodic reconstruction error is bounded from below by a multiple of \(N^{-1}\), independently of how the binary values are assigned to the \(N\) samples. The obstruction is therefore not tied to a particular quantization rule. It reflects the fact that a finite binary sequence does not, in general, close consistently when interpreted as the quantization of a signal on the circle.

This mechanism is particularly relevant for second-order schemes. Away from the endpoints, second-order noise shaping produces the expected quadratic cancellation. Terminal contributions, however, may remain on the \(N^{-1}\) scale and can therefore dominate the interior \(N^{-2}\) contribution. 
This problem can be avoided when the signal is altered by adding a suitable constant (which, arguably, is less severe than a non-constant error of comparable size), see \cite{GrafKrahmerKrauseSolberg2023}. Under an appropriate stability assumption, this modification yields second-order accuracy with respect to the shifted signal.

In $2D$, however, the suitable alterations can differ from line to line and column to column, and it is not clear how a global benign alteration can be designed.
Without such an alteration, one can still phase boundary effects, but this time on $1D$ boundary, rather than just a point. In Figure \ref{image_boundary1} we illustrate this for the weighted \(\Sigma\Delta\) schemes introduced in
\cite{KrahmerVeselovska2023}, a class of $\Sigma\Delta$ schemes specifically designed for halftoning, which combine one-dimensional noise-shaping relations along several directions of a Cartesian grid. More precisely, second-order weighted $\Sigma\Delta$ schemes use a filter that is a convex combination of horizontal, vertical, and diagonal second-order filters, so their basic building blocks remain finite one-dimensional recursions.

We observe that although the input functions and the reconstruction kernel are periodic, the reconstruction error develops pronounced ridges near the seams at which the finite directional recursions are initialized or interrupted. Our interpretation is that different rows, columns, and diagonals each carry their own terminal mismatch, and in the final two-dimensional reconstruction, the contributions from these trajectories are superimposed. 

\begin{figure}
\centering
\includegraphics[width=0.9\linewidth]{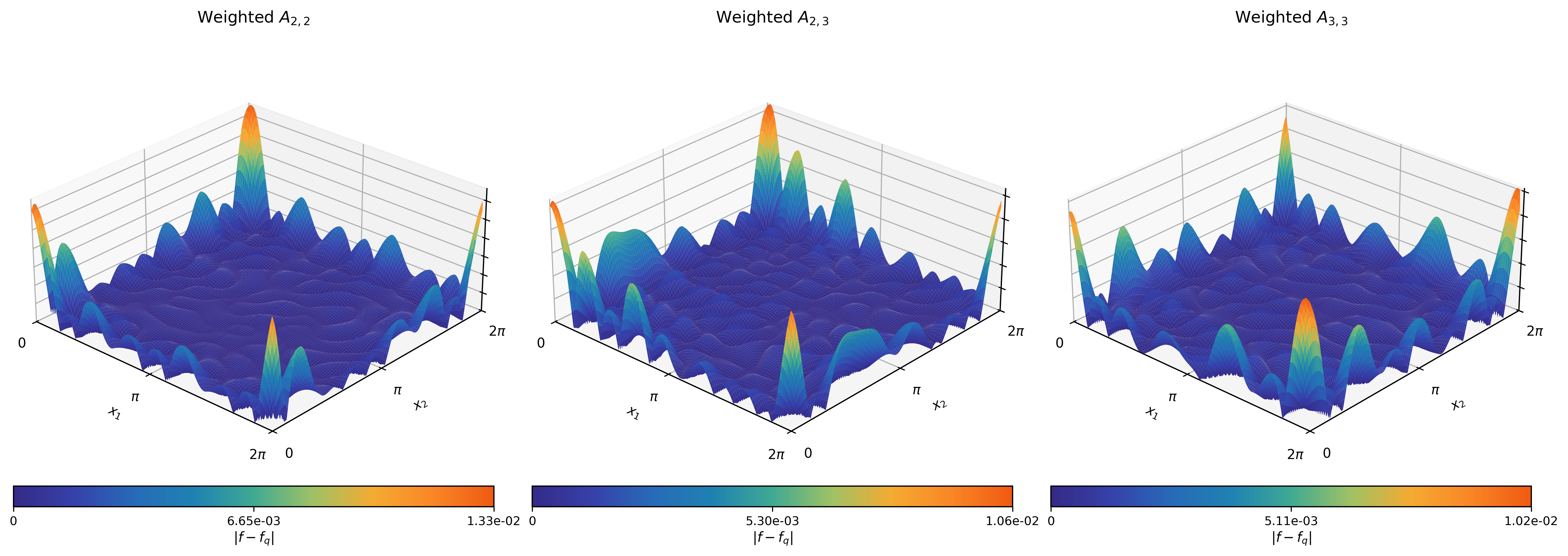}
\caption{Representative absolute reconstruction errors for some second-order weighted \(\Sigma\Delta\) schemes introduced in \cite{KrahmerVeselovska2023}. The error is amplified near the artificial seams of the Cartesian domain.}
\label{image_boundary1}
\end{figure}

In this paper, we show that this issue can be addressed by an alternative discretization, which is designed to be captured by a single trajectory.

\section{Rank-one lattice sampling and quantization}
\label{sec:rank-one-lattice}

The preceding discussion shows that restarting the one-dimensional quantizer
along several finite trajectories can produce a separate terminal boundary
contribution for each trajectory. We therefore replace these independently
initialized paths with a single ordering containing all the sampling points.
Rather than modifying the causal quantization rule, we modify the sampling
geometry and arrange the points along one rank-one lattice orbit on the torus.
The resulting samples form a uniformly parametrized one-dimensional sequence,
so that the $\Sigma\Delta$ recursion can be run only once and consequently
produces only one terminal boundary contribution.

To make this precise, let $M\in\mathbb N$, set $N:=M^2+1$ and introduce the uniformly spaced parameters
\[
t_n:=\frac{2\pi n}{N},
\qquad n\in\mathbb Z.
\]

Then the rank-one lattice generated by $z_M=(1,M)$ with modulus $N$ is obtained by considering (see \cite{potts2021approximation} for a generalized discussion on how to construct rank-one lattices)
\begin{equation}
\label{eq:lattice-orbit}
\gamma_n
:=
\left[
t_n
\begin{pmatrix}
1\\
M
\end{pmatrix}
\right]_{2\pi}
=
\left[
\begin{pmatrix}
2\pi n/N\\
2\pi Mn/N
\end{pmatrix}
\right]_{2\pi},
\qquad n\in\mathbb Z,
\end{equation}
where $[\cdot]_{2\pi}$ denotes the modulo operation, i.e., the remainder when dividing by $2\pi$. The name rank-one lattice stems from the fact that starting from $\gamma_0=(0,0)$, each point is obtained from the preceding one
by adding the fixed vector $\omega_M
:=
\frac{2\pi}{N}
\begin{pmatrix}
1\\
M
\end{pmatrix}
$
and reducing both coordinates modulo $2\pi$, i.e., 
\[
\gamma_{n+1}=[\gamma_n+\omega_M]_{2\pi}.
\]
It is well known that $\gamma_{n+N}=\gamma_n,$
so the orbit closes after exactly $N$ steps. Consequently, the samples $y_n:=f(\gamma_n), \qquad n=0,\ldots,N-1,$
can be processed as a single finite one-dimensional sequence.
The proposed orbit also retains, at the relevant spatial scale, the geometry
of a Cartesian raster. Indeed, the sampling points induce a partition of the torus into $M^2+1$ squares of side length $\tfrac{2\pi}{\sqrt{M^2+1}}$.

We emphasize that the trajectory is closed in the geometric sense, but the
quantizer will still be the usual causal finite recursion, rather than a cyclic
quantizer. However, the construction yields only a single boundary term, rather than one for each row and column.  not automatically eliminate the
remaining terminal boundary term; its purpose is to avoid producing one such
term for every separately initialized row or trajectory.

It remains to verify that these sampling points capture all the information
contained in a function in $P_K(\mathbb T^2)$. This follows from the standard reconstructing criterion for
rank-one lattices, see, for instance, \cite[Section~2.1]{potts2021approximation}, which is at the core of the following proposition. 

\begin{proposition}[Sampling and reconstruction along the lattice orbit]
\label{prop:spiral-sampling}
Let $K,M\in\mathbb N$, let $N=M^2+1$, and let $\Gamma_M:=\{\gamma_n \ : \ n = 0,\dots,N-1\}$ be defined by
\eqref{eq:lattice-orbit}. Then a function in $P_K$ is uniquely identified by its samples in $\Gamma_M$ if and only if 
\begin{equation}
\label{eq:M-condition}
M\geq 2K+1.
\end{equation}
Concretely, under this condition, every $f\in P_K(\mathbb T^2)$ can be reconstructed via
\begin{equation}
\label{eq:general-direct-reconstruction}
f(x)
=
\frac1N
\sum_{n=0}^{N-1}
f(\gamma_n)D_{K}(x-\gamma_n),
\qquad x\in\mathbb T^2,
\end{equation}
where
\[
D_{K}(x):=\sum_{k\in I_K}e^{\mathrm{i}k\cdot x}.
\]
\end{proposition}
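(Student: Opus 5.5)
The plan is to reduce both directions of Proposition~\ref{prop:spiral-sampling} to the standard character-sum identity for rank-one lattices. I read $I_K$ as $[-K,K]^2\cap\mathbb Z^2$, the frequency set of $P_K(\mathbb T^2)$. Since $\gamma_n\equiv t_n(1,M)^T$ modulo $2\pi$ and $e^{\mathrm{i}k\cdot x}$ is $2\pi$-periodic, the reduction modulo $2\pi$ can be ignored. For $h\in\mathbb Z^2$ the geometric series gives
\[
\frac1N\sum_{n=0}^{N-1}e^{\mathrm{i}h\cdot\gamma_n}
=\frac1N\sum_{n=0}^{N-1}e^{2\pi\mathrm{i}\,n\,(h_1+Mh_2)/N}
=\begin{cases}1,& h_1+Mh_2\equiv 0 \pmod N,\\ 0,&\text{otherwise.}\end{cases}
\]
Everything else is bookkeeping around the ``aliasing condition'' that no nonzero difference $h=k-k'$ with $k,k'\in I_K$ satisfies $h_1+Mh_2\equiv0 \pmod N$.

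\textbf{Sufficiency and the formula.} Assume $M\ge 2K+1$. Let $h=k-k'$ with $k,k'\in I_K$, so that $h\in[-2K,2K]^2$. Then
\[
|h_1+Mh_2|\le 2K(M+1)\le (M-1)(M+1)=M^2-1<N.
\]
Hence $h_1+Mh_2\equiv0 \pmod N$ forces $h_1=-Mh_2$. Because $|h_1|\le 2K<M$, this gives $h_2=0$ and then $h_1=0$.

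Next, write $f=\sum_{k\in I_K}\widehat f(k)e^{\mathrm{i}k\cdot x}$ and insert it into the right-hand side of \eqref{eq:general-direct-reconstruction}:
\[
\frac1N\sum_{n}f(\gamma_n)D_K(x-\gamma_n)
=\sum_{k,k'\in I_K}\widehat f(k)\,e^{\mathrm{i}k'\cdot x}\,\frac1N\sum_n e^{\mathrm{i}(k-k')\cdot\gamma_n}.
\]
By the character identity and the aliasing condition, the inner sum equals $\delta_{k,k'}$, so the right-hand side is $f(x)$. This proves \eqref{eq:general-direct-reconstruction}. Uniqueness follows immediately: two functions in $P_K$ with the same samples on $\Gamma_M$ have the same right-hand side, hence coincide.

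\textbf{Necessity.} Suppose $M\le 2K$. I will exhibit a nonzero function in $P_K$ that vanishes on $\Gamma_M$. Take $k=(K,0)$ and $k'=(K-M,1)$. Both lie in $I_K$, since $-K\le K-M\le K$ and $K\ge1$. Their difference is $h=(M,-1)$, which satisfies $h_1+Mh_2=0$. Consequently $e^{\mathrm{i}k\cdot\gamma_n}=e^{\mathrm{i}k'\cdot\gamma_n}$ for all $n$. The function $g=e^{\mathrm{i}k\cdot x}-e^{\mathrm{i}k'\cdot x}$ is therefore a nonzero element of $P_K$ with zero samples, and uniqueness fails.

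\textbf{Main obstacle.} The only genuinely delicate step is the counting bound $2K(M+1)<M^2+1$ in the sufficiency direction, together with the separate argument that excludes the exact relation $h_1=-Mh_2$. Everything else is the routine orthogonality computation.
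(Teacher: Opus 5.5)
Your proof is correct and follows the paper's argument. It uses the same bound $|h_1+Mh_2|\le 2K(M+1)\le M^2-1<N$ to force $h_1=-Mh_2$ and hence $h=0$, and the same aliasing vector $\pm(M,-1)$ for necessity. The only differences are presentational. You verify the rank-one reconstructing criterion directly through the character sum instead of citing it. You also turn the failure of that criterion into an explicit nonzero element $e^{\mathrm{i}k\cdot x}-e^{\mathrm{i}k'\cdot x}\in P_K$ vanishing on $\Gamma_M$, which addresses the uniqueness clause of the statement more directly than the paper does.
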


\begin{proof}
The standard reconstructing criterion for rank-one lattices (see \cite[Section~2.1]{potts2021approximation}) states that a necessary and sufficient condition for the reconstruction formula $18...$ to hold is that the map
\[
k\longmapsto k\cdot z_M \pmod N
\]
is injective on $I_K:= [-K,K]^2\cap\mathbb{Z}^2$, or, equivalently,
\[
m\cdot z_M\not\equiv0\pmod N
\qquad
\text{for every }
m\in(I_K-I_K)\setminus\{0\}.
\]

Since $I_K-I_K=[-2K,2K]^2\cap\mathbb Z^2,$
the reconstructing criterion reduces to proving that

\begin{equation}
\label{eq:collision-congruence}
r+Ms\not\equiv0\pmod{M^2+1}
\end{equation}
for every nonzero \((r,s)\in[-2K,2K]^2\cap\mathbb Z^2\).
Suppose first that \(M\geq 2K+1\), and assume that
$$
r+Ms\equiv0\pmod{M^2+1}
$$
for some \((r,s)\in[-2K,2K]^2\cap\mathbb Z^2\). Since \(2K\leq M-1\), we have
$$
|r+Ms|
\leq |r|+M|s|
\leq 2K(M+1)
\leq (M-1)(M+1)
=M^2-1
<M^2+1.
$$
Thus, the only multiple of \(M^2+1\) that \(r+Ms\) can equal is zero, and hence

$$
r+Ms=0.
$$

If \(s\neq0\), this identity gives

$$
|r|=M|s|\geq M>2K,
$$

contradicting \(|r|\leq2K\). Therefore \(s=0\), and consequently \(r=0\). Hence no nonzero element of \(I_K-I_K\) satisfies the collision congruence, so the map

$$
k\longmapsto k\cdot(1,M)\pmod{M^2+1}
$$

is injective on \(I_K\).

Conversely, suppose that \(M\leq2K\). Then $(-M,1)\in (I_K-I_K)\setminus\{0\},$
and
$$
(-M,1)\cdot(1,M)=0\equiv0\pmod{M^2+1}.
$$
Therefore, the reconstruction criterion fails. 
\end{proof}

\subsection{Quantization along the rank-one lattice}
\label{sec:quantization-error-after-update}

We now combine the one-dimensional noise-shaping relation with the rank-one
lattice reconstruction formula. The crucial point is that the lattice nodes
form a single closed ordering. Hence, we obtain one global terminal
condition, rather than a collection of unrelated conditions associated with
different rows.

Let
\[
 N=M^2+1,
 \qquad
 t_n=\frac{2\pi n}{N},
 \qquad
 \gamma_n=[(t_n,Mt_n)]_{2\pi},
\]
and assume that $M\geq2K+1$. With this notation,
\eqref{eq:general-direct-reconstruction} represents $f(x)$,
$x=(x_1,x_2)\in\mathbb T^2$, as a linear combination of the
one-dimensional trajectory kernel
\begin{equation}
\label{eq:trajectory-kernel-error}
 \Psi_M(x;t)
 :=D_K\bigl(x-(t,Mt)\bigr)
 =\varphi^K(x_1-t)\varphi^K(x_2-Mt),
 \qquad t\in\mathbb T,
\end{equation}
evaluated at $t=t_n$ with the samples $y_n:=f(\gamma_n)$ as coefficients. The modulo operation can be omitted because $D_K$ is
periodic in both variables, so we obtain the representation
\begin{equation}
\label{eq:trajectory-reconstruction-error}
 f(x)=\frac1N\sum_{n=0}^{N-1}y_n\Psi_M(x;t_n).
\end{equation}

Our strategy is to quantize this representation, that is, we aim to find a sequence $(q_n)_{n = 0}^{ N-1}$ such that the quantized representation 
\begin{equation}\label{reconstructfromq}
    f_q(x):= \frac{1}{N}\sum_{n = 0}^{N-1}q_n\Psi_M(x;t_n)
\end{equation}
approximates $f$, that is,
$$e_N(x) := f(x) - f_q(x) =\frac1N\sum_{n=0}^{N-1}(y_n-q_n)\Psi_M(x;t_n)$$
is small in the supremum norm. 

To find the $q_n$'s, we apply $\Sigma\Delta$ modulation along the lattice. This gives rise to a single one-dimensional
recursion, which we initialize by $0$ for $n<0$.  
We proceed by analyzing the bounds for the reconstruction error for the first and second order quantization in the following theorem. The proof structure is similar to Proposition $2$ in \cite{GrafKrahmerKrauseSolberg2023}, but adapted to a $2$-dimensional kernel decomposition; we include the proof for completeness.

\begin{theorem}
\label{prop:spiral-quantization-errors}
Let \(f\in P_K(\mathbb T^2)\), $ \|f\|_{L^\infty(\mathbb T^2)}\leq\mu<1.$ Let $M\ge 2K+1$, $N = M^2+1$ and $y_n := f(\gamma_n)$, where $(\gamma_n)_{n = 0}^{N-1}$ is the rank-one lattice defined by \eqref{eq:lattice-orbit}. Then the following statements hold.

$i)$ \textbf{First-order quantization.}\\
If the sequence $(y_n)$ is quantized via 
\begin{equation}\left\{
    \begin{aligned}
        &u_n = u_{n-1} +y_n -q_n\\
        & q_n = \operatorname{sign}(u_{n-1} + y_n)\\
        & u_{-1} = 0
    \end{aligned}\right., 
\end{equation}
then 
\begin{equation}
\label{eq:first-order-spiral-error-bound}
 |f(x)-f_q(x)|
 \leq
 \frac1N
 \left(
 \|\Psi_M(x;\cdot)\|_{L^\infty(\mathbb T)}
 +
 \|\partial_t\Psi_M(x;\cdot)\|_{L^1(\mathbb T)}
 \right).
\end{equation}
$ii)$ \textbf{Second-order quantization.}
Let $h^{(k)}$ be a second-order minimally supported filter as in \eqref{second_order_minimal}, with $k \ge \lceil2/(1-\mu)\rceil$. If the sequence $(y_n)$ is quantized via
\begin{equation}\left\{
    \begin{aligned}
        &v_n = \frac{k+1}{k}v_{n-1} - \frac{1}{k}v_{n-(k+1)} +y_n -q_n\\
        & q_n = \operatorname{sign}\left(\frac{k+1}{k}v_{n-1} - \frac{1}{k}v_{n-(k+1)} + y_n\right)\\
        &v_{-i} = 0 \ \text{for all } i<0
    \end{aligned}\right., 
\end{equation}
then 
\begin{equation}
\label{eq:second-order-spiral-error-bound}
\begin{aligned}
 |f(x)-f_q(x)|
 &\leq
 \frac{k+1}{N}
 \|\Psi_M(x;\cdot)\|_{L^\infty(\mathbb T)}
 \\
 &\quad+
 \frac{\pi(k+1)}{N^2}
 \left(
 \|\partial_t\Psi_M(x;\cdot)\|_{L^\infty(\mathbb T)}
 +
 \|\partial_t^2\Psi_M(x;\cdot)\|_{L^1(\mathbb T)}
 \right).
\end{aligned}
\end{equation}
\end{theorem}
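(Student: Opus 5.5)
The plan is to fix $x\in\mathbb T^2$ and write $\psi(t):=\Psi_M(x;t)$. This is a smooth $2\pi$-periodic function of the single variable $t$, so the estimate reduces to a one-dimensional summation-by-parts argument along the orbit, in the spirit of Proposition~2 of \cite{GrafKrahmerKrauseSolberg2023}.

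For the first-order scheme I proceed as follows.
\begin{itemize}
\item Stability comes from \eqref{eq:gunturk-stability-criterion} with $h=(1)$ and $\|y\|_\infty\le\mu<1$: since $u_{-1}=0$, we get $|u_n|\le1$ for all $n$.
\item Substituting $y_n-q_n=u_n-u_{n-1}$ into $e_N$ and using Abel summation gives
\[
\sum_{n=0}^{N-1}(u_n-u_{n-1})\psi(t_n)=u_{N-1}\psi(t_{N-1})-\sum_{n=0}^{N-2}u_n\bigl(\psi(t_{n+1})-\psi(t_n)\bigr).
\]
The boundary term at $n=0$ vanishes because $u_{-1}=0$.
\item The terminal term is bounded by $\|\psi\|_{L^\infty}$.
\item Each difference is $\int_{t_n}^{t_{n+1}}\psi'$. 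The intervals $[t_n,t_{n+1}]$ are disjoint and lie in $[0,2\pi)$, so the sum of the differences is at most $\|\psi'\|_{L^1(\mathbb T)}$.
\item Dividing by $N$ yields \eqref{eq:first-order-spiral-error-bound}.
\end{itemize}

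For the second-order scheme I use the minimal filter $h^{(k)}$ with $k\ge\lceil2/(1-\mu)\rceil$.
\begin{itemize}
\item Since $\|h^{(k)}\|_1=1+2/k\le 2-\mu$, Güntürk's criterion gives $|v_n|\le1$.
\item Factorize $\delta_0-h^{(k)}=\Delta^2 g$ explicitly. One computes $(1-\tfrac{k+1}{k}z+\tfrac1k z^{k+1})=\tfrac1k(1-z)^2(k+(k-1)z+\dots+z^{k-1})$, so $g_j=(k-j)/k$ for $0\le j\le k-1$. This gives $\|g\|_1=(k+1)/2$, and $u:=g*v$ satisfies $|u_n|\le (k+1)/2$ and $y-q=\Delta^2 u$, with $u_n=0$ for $n<0$.
\item Sum by parts twice. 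With $\Delta_+\psi_n:=\psi(t_{n+1})-\psi(t_n)$,
\[
\sum_{n=0}^{N-1}\Delta^2u_n\,\psi_n=\Delta u_{N-1}\psi_{N-1}-u_{N-1}\Delta_+\psi_{N-2}+\sum_{n=0}^{N-3}u_n\,\Delta_+^2\psi_n,
\]
up to the exact placement of boundary indices, which I will fix carefully.
\item The first boundary term uses $|\Delta u_{N-1}|=|(g*v)_{N-1}-(g*v)_{N-2}|$. Telescoping the coefficients of $g$ gives $\|\Delta g\|_1\le 2$, which is $\le k+1$. Hence this term is at most $(k+1)\|\psi\|_\infty$, giving the first term of \eqref{eq:second-order-spiral-error-bound} after dividing by $N$.
\item The second boundary term is bounded by $\frac{k+1}{2}\cdot\frac{2\pi}{N}\|\psi'\|_\infty$ via the mean value theorem.
\item The interior second differences satisfy $|\Delta_+^2\psi_n|\le\frac{2\pi}{N}\int_{t_n}^{t_{n+2}}|\psi''|$. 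Each point is covered at most twice, so the sum is $\le \frac{4\pi}{N}\|\psi''\|_1$. Multiplied by $\frac{k+1}{2}$, this gives $\frac{2\pi(k+1)}{N}\|\psi''\|_1$.
\end{itemize}
After dividing by $N$, the $\|\psi''\|_1$ term comes out as $\frac{2\pi(k+1)}{N^2}\|\psi''\|_1$, which is twice the stated constant. I have not yet found a sharper step that recovers the factor $\pi$.

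The main obstacle is the constant bookkeeping in the second-order case. The constants $(k+1)$ and $\pi(k+1)$ must come out exactly, which depends on the sharp values of $\|g\|_1$ and $\|\Delta g\|_1$ and on whether the $\psi''$ sum loses a factor $2$ from double covering. To remove that factor $2$, I would first try writing $\Delta_+^2\psi_n=\int\!\!\int\psi''$ with the triangle (B-spline) kernel of total mass $(2\pi/N)^2$, whose overlapping supports sum to the constant $2\pi/N$. If that still misses the stated constant, I would fall back on a direct estimate $|u_n|\le\frac{k+1}{2}$ combined with Hölder in the most economical grouping.
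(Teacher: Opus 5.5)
Your argument is the paper's proof: the same Abel summation, the same bounds $\|u\|_{\ell^\infty}\le\|g^{(k)}\|_{\ell^1}=(k+1)/2$ and $|\psi_{N-2}-\psi_{N-1}|\le\tau\|\partial_t\Psi_M(x;\cdot)\|_{L^\infty}$ with $\tau=2\pi/N$, and your estimate $|(\Delta u)_{N-1}|\le\|\Delta g^{(k)}\|_{\ell^1}\|v\|_{\ell^\infty}\le 2$ is even slightly sharper than the paper's $2\|u\|_{\ell^\infty}\le k+1$; note that the second boundary term is $u_{N-2}(\psi_{N-2}-\psi_{N-1})$, with $u_{N-2}$ rather than $u_{N-1}$, which does not change the bound. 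The factor $2$ you lose in the interior sum is removed by the hat-kernel idea you propose, and this is what the paper does: $\psi_n-2\psi_{n+1}+\psi_{n+2}=\int_{t_n}^{t_{n+1}}\int_s^{s+\tau}\psi''(r)\,dr\,ds=\int\psi''(r)B(r-t_n)\,dr$ with $B$ the hat function of height $\tau$ on $[0,2\tau]$, and since $\sum_n B(r-t_n)\le\tau$ for every $r$ (the paper phrases this via Fubini and periodicity), you get $\sum_{n=0}^{N-3}|\psi_n-2\psi_{n+1}+\psi_{n+2}|\le\tau\|\psi''\|_{L^1(\mathbb T)}$, hence exactly the stated constant $\pi(k+1)/N^2$.
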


\begin{proof}
Fix \(x\in\mathbb T^2\) and set $\psi_n:=\Psi_M(x;t_n), \ 
 \tau:=\frac{2\pi}{N}.$
Notice that \(t\mapsto\Psi_M(x;t)\) is a smooth \(2\pi\)-periodic
function. Thus, the modulo operation appearing in the definition of the
lattice produces no discontinuity in the following integral estimates.

\medskip
\noindent
$i)$ \textbf{First-order scheme.}
The recursion gives
\[
 y_n-q_n=u_n-u_{n-1}=(\Delta u)_n.
\]
Moreover, since \(|y_n|\leq\mu<1\) with \(u_{-1}=0\), from \eqref{eq:gunturk-stability-criterion} one has $ \|u\|_{\ell^\infty}\leq1$

Discrete summation by parts yields
\begin{align}
 f(x) -f_q(x) = \frac{1}{N}\sum_{n=0}^{N-1}(\Delta u)_n\psi_n
 &=\frac{1}{N}\left(
 u_{N-1}\psi_{N-1}
 +
 \sum_{n=0}^{N-2}u_n(\psi_n-\psi_{n+1})\right).
\label{eq:first-order-sbp}
\end{align}
For \(n=0,\ldots,N-2\), the fundamental theorem of calculus gives
\[
 \psi_{n+1}-\psi_n
 =
 \int_{t_n}^{t_{n+1}}
 \partial_t\Psi_M(x;t)\,dt.
\]
Consequently,
\begin{align*}
 \sum_{n=0}^{N-2}|\psi_{n+1}-\psi_n|
 &\leq
 \sum_{n=0}^{N-2}
 \int_{t_n}^{t_{n+1}}
 |\partial_t\Psi_M(x;t)|\,dt \leq
 \|\partial_t\Psi_M(x;\cdot)\|_{L^1(\mathbb T)}.
\end{align*}
Combining the above estimate with \(\|u\|_{\ell^\infty}\leq1\), one obtains
\[
 |f(x)-f_q(x)|
 \leq
 \frac1N
 \left(
 \|\Psi_M(x;\cdot)\|_{L^\infty(\mathbb T)}
 +
 \|\partial_t\Psi_M(x;\cdot)\|_{L^1(\mathbb T)}
 \right).
\]

\medskip
\noindent
$ii)$ \textbf{Second-order scheme.}
The assumed stability condition for \(h^{(k)}\), together with the zero
initial conditions, implies $\|v\|_{\ell^\infty}\leq1$ (see Section $2$).
Therefore, recalling formula \eqref{eq:generalized-order-condition} with $u:=g^{(k)}*v$, it holds that
\[
 y-q
 =
 (\delta_0-h^{(k)})*v
 =
 \Delta^2(g^{(k)}*v)
 =
 \Delta^2u.
\]
and $ \|u\|_{\ell^\infty}\leq
 \|g^{(k)}\|_{\ell^1}= (k+1)/2$, where the last equality follows from the fact that $\|g\|_{\ell^1}$ is given by the product of support locations of the filter $h$ divided by $2$ (see \cite{Gunturk2003}). \\
Similarly to the argument in $i)$, applying discrete summation by parts twice, and using the fact that with the chosen initialization of $v$ one has \(u_{-1}=u_{-2}= 0\), gives
\begin{align}
 f(x)-f_q(x) = \frac{1}{N}\sum_{n=0}^{N-1}(\Delta^2u)_n\psi_n
 &=
  \frac{1}{N}(\Delta u)_{N-1}\psi_{N-1}
 +
 \frac{1}{N} u_{N-2}(\psi_{N-2}-\psi_{N-1})\\&+ \frac{1}{N}\sum_{n=0}^{N-3}u_n(\psi_n-2\psi_{n+1}+\psi_{n+2}).
\label{eq:second-order-sbp}
\end{align}
The first boundary term satisfies
\[
 |(\Delta u)_{N-1}\psi_{N-1}|
 \leq
 2\|u\|_{\ell^\infty}
 \|\Psi_M(x;\cdot)\|_{L^\infty(\mathbb T)},
\]
while for the second boundary term it holds that
\[
 |\psi_{N-2}-\psi_{N-1}|
 \leq
 \tau
 \|\partial_t\Psi_M(x;\cdot)\|_{L^\infty(\mathbb T)}.
\]

For the interior second differences, the fundamental theorem of calculus
applied twice gives
\begin{align*}
 \psi_n-2\psi_{n+1}+\psi_{n+2}
 &=
 \int_{t_n}^{t_{n+1}}
 \int_s^{s+\tau}
 \partial_t^2\Psi_M(x;r)\,dr\,ds.
\end{align*}
Therefore, using the periodicity of
\(\partial_t^2\Psi_M(x;\cdot)\),
\begin{align*}
 &\sum_{n=0}^{N-3}
 |\psi_n-2\psi_{n+1}+\psi_{n+2}| \leq
 \sum_{n=0}^{N-1}
 \int_{t_n}^{t_{n+1}}
 \int_s^{s+\tau}
 |\partial_t^2\Psi_M(x;r)|\,dr\,ds\\&=
 \int_0^{2\pi}
 \int_s^{s+\tau}
 |\partial_t^2\Psi_M(x;r)|\,dr\,ds =
 \tau
 \|\partial_t^2\Psi_M(x;\cdot)\|_{L^1(\mathbb T)}.
\end{align*}
The last equality follows from Fubini's theorem and periodicity.

Using these estimates in \eqref{eq:second-order-sbp}, we obtain
\begin{align*}
 \left|
 \sum_{n=0}^{N-1}(\Delta^2u)_n\psi_n
 \right|
 &\leq2\|u\|_{\ell^\infty}
 \|\Psi_M(x;\cdot)\|_{L^\infty(\mathbb T)}
\\&+
 \tau\|u\|_{\ell^\infty}
 \left(
 \|\partial_t\Psi_M(x;\cdot)\|_{L^\infty(\mathbb T)}
  +
 \|\partial_t^2\Psi_M(x;\cdot)\|_{L^1(\mathbb T)}
 \right).
\end{align*}

Finally, recalling that \(\tau=2\pi/N\), we conclude
that
\begin{align*}
 |f(x)-f_q(x)|
 &\leq
 \frac{2\|g^{(k)}\|_{\ell^1}}{N}
 \|\Psi_M(x;\cdot)\|_{L^\infty(\mathbb T)}
 \\
 &\quad+
 \frac{2\pi\|g^{(k)}\|_{\ell^1}}{N^2}
 \left(
 \|\partial_t\Psi_M(x;\cdot)\|_{L^\infty(\mathbb T)}
 +
 \|\partial_t^2\Psi_M(x;\cdot)\|_{L^1(\mathbb T)}
 \right),
\end{align*}
which proves the result.
\end{proof}

\begin{remark}
Observe that second-order quantization requires a constraint on the signal amplitude. In
digital halftoning, this may be problematic, since the image intensities must
be rescaled before quantization to satisfy the stability criterion. The
resulting halftone would then represent a normalized version of the image
rather than its original intensity range. This limitation can be mitigated by
using feedback filters with longer support, although the required support may
become prohibitively large as the signal amplitude approaches the stability
threshold. Recent work \cite{joy2026trajectory} shows, however, that the
classical sufficient condition for minimally supported second-order filters is
often overly conservative. Under suitable smoothness assumptions on the input
function, stability can still be guaranteed for signals with substantially
larger amplitudes. In particular, for a filter of length $k+1$, if the function is smooth enough, it is possible to quantize a signal of maximum amplitude up to $1-e^2/(k+1)^2$. We believe that the framework of this paper can be adapted to this more complicated stability analysis, but as our focus is on the boundary effects, we decided to base our results on the simpler guarantees.
\end{remark}

To obtain explicit decay rates from
Proposition~\ref{prop:spiral-quantization-errors}, it remains to estimate the
norms of the trajectory kernel and its derivatives.

\begin{lemma}
\label{lem:trajectory-kernel-estimates}
For \(j=0,1,2\), set
\begin{equation}
\label{eq:dirichlet-kernel-norms}
 A_j
 :=
 \bigl\|(\varphi^K)^{(j)}\bigr\|_{L^\infty(\mathbb T)},
 \qquad
 B_j
 :=
 \bigl\|(\varphi^K)^{(j)}\bigr\|_{L^1(\mathbb T)}.
\end{equation}
It holds that
\begin{align}
 \|\Psi_M(x;\cdot)\|_{L^\infty(\mathbb T)}
 &\leq A_0^2 = O(1),
 \label{eq:trajectory-kernel-Linf}\\
 \|\partial_t\Psi_M(x;\cdot)\|_{L^1(\mathbb T)}
 &\leq A_1B_0+MA_0B_1 = O(M),
 \label{eq:trajectory-first-derivative-L1}\\
 \|\partial_t\Psi_M(x;\cdot)\|_{L^\infty(\mathbb T)}
 &\leq (M+1)A_0A_1= O(M),
 \label{eq:trajectory-first-derivative-Linf}\\
 \|\partial_t^2\Psi_M(x;\cdot)\|_{L^1(\mathbb T)}
 &\leq
 A_2B_0+2MA_1B_1+M^2A_0B_2= O(M^2).
 \label{eq:trajectory-second-derivative-L1}
\end{align}
\end{lemma}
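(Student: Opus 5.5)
The estimates follow directly from the product structure $\Psi_M(x;t)=\varphi^K(x_1-t)\,\varphi^K(x_2-Mt)$ together with the Leibniz rule and the chain rule in $t$. Write $a(t):=\varphi^K(x_1-t)$ and $b(t):=\varphi^K(x_2-Mt)$. Then
\[
 a^{(j)}(t)=(-1)^j(\varphi^K)^{(j)}(x_1-t),
 \qquad
 b^{(j)}(t)=(-M)^j(\varphi^K)^{(j)}(x_2-Mt).
\]
The bound \eqref{eq:trajectory-kernel-Linf} is immediate from $|a|,|b|\le A_0$. For the first derivative we have $\partial_t\Psi_M=a'b+ab'$. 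Taking sup norms gives $A_1A_0+MA_0A_1=(M+1)A_0A_1$, which is \eqref{eq:trajectory-first-derivative-Linf}. For the second derivative, $\partial_t^2\Psi_M=a''b+2a'b'+ab''$.

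For the $L^1$ bounds, in each product I put the sup norm on one factor and the $L^1$ norm on the other. Which factor gets the $L^1$ norm is dictated by the target inequality. In $a'b$ the sup goes on $a'$ and $L^1$ on $b$. In $ab'$ the sup goes on $a$ and $L^1$ on $b'$. In the second-derivative terms, the $L^1$ norm likewise always falls on the $b$-factor. This reduces everything to two facts.

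The key step is the behaviour of the $L^1$ norm of the rescaled factor. I need
\[
 \int_0^{2\pi}\bigl|(\varphi^K)^{(j)}(x_2-Mt)\bigr|\,dt=B_j .
\]
To see this, substitute $s=x_2-Mt$. As $t$ runs over $[0,2\pi)$, $s$ runs over an interval of length $2\pi M$, and $ds=-M\,dt$. Since $(\varphi^K)^{(j)}$ is $2\pi$-periodic, the integral over that interval equals $M$ times the integral over one period. Combined with the factor $1/M$ from $dt=ds/M$ (up to sign), this gives exactly $B_j$. So $\|b^{(j)}\|_{L^1}=M^jB_j$, while $\|a^{(j)}\|_{L^\infty}=A_j$.

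Assembling the pieces:
\begin{itemize}
\item $\|a'b\|_{L^1}\le A_1B_0$;
\item $\|ab'\|_{L^1}\le A_0\,MB_1$, which together give \eqref{eq:trajectory-first-derivative-L1};
\item $\|a''b\|_{L^1}\le A_2B_0$, $\|2a'b'\|_{L^1}\le 2A_1MB_1$ and $\|ab''\|_{L^1}\le A_0M^2B_2$, which together give \eqref{eq:trajectory-second-derivative-L1}.
\end{itemize}

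The $O(\cdot)$ statements follow because $A_j$ and $B_j$ depend only on $K$, not on $M$. Here $K$ is held fixed, so $A_j,B_j$ are constants; for instance $A_j\le(2K+1)K^j$ directly from the Fourier sum.

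The only step that needs care is the periodic change of variables. One must make sure the $M$-fold covering of the period cancels the Jacobian factor $1/M$, rather than contributing an extra power of $M$. Everything else is routine.
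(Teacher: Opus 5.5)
Your proposal is correct and follows the paper's proof essentially step for step: you expand $\partial_t\Psi_M$ and $\partial_t^2\Psi_M$ by the product and chain rules, and you use the periodic substitution $s=x_2-Mt$ to get $\int_0^{2\pi}|(\varphi^K)^{(j)}(x_2-Mt)|\,dt=B_j$ (this relies on $M$ being an integer, so that $s$ covers exactly $M$ periods), and you bound each term by $L^\infty$ on one factor and $L^1$ on the $M$-dependent one. Your explicit bound $A_j\le(2K+1)K^j$ is a harmless addition that the paper does not state.
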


\begin{proof}
The product and chain rules give
\begin{align}
 \partial_t\Psi_M(x;t)
 &=
 -(\varphi^K)'(x_1-t)\varphi^K(x_2-Mt)
 \notag\\
 &\quad
 -M\varphi^K(x_1-t)(\varphi^K)'(x_2-Mt),
 \label{eq:trajectory-first-derivative}\\
 \partial_t^2\Psi_M(x;t)
 &=
 (\varphi^K)''(x_1-t)\varphi^K(x_2-Mt)
 \notag\\
 &\quad
 +2M(\varphi^K)'(x_1-t)(\varphi^K)'(x_2-Mt)
 \notag\\
 &\quad
 +M^2\varphi^K(x_1-t)(\varphi^K)''(x_2-Mt).
 \label{eq:trajectory-second-derivative}
\end{align}

Because \(M\) is an integer, the map \(t\mapsto x_2-Mt\) traverses
exactly \(M\) complete periods as \(t\) ranges over \([0,2\pi]\).
Consequently, for \(j=0,1,2\),
\begin{align}
 \int_0^{2\pi}
 \bigl|(\varphi^K)^{(j)}(x_2-Mt)\bigr|\,dt
 &=
 \frac1M
 \int_{x_2-2\pi M}^{x_2}
 \bigl|(\varphi^K)^{(j)}(s)\bigr|\,ds
 \notag\\
 &=
 B_j.
 \label{eq:periodic-change-of-variables}
\end{align}

The estimate \eqref{eq:trajectory-kernel-Linf} follows by bounding both
factors in \(L^\infty\). Applying the triangle inequality to
\eqref{eq:trajectory-first-derivative}, estimating one factor in
\(L^\infty\) and the other in \(L^1\), gives
\[
 \|\partial_t\Psi_M(x;\cdot)\|_{L^1}
 \leq
 A_1B_0+MA_0B_1.
\]
Bounding both factors in \(L^\infty\) gives
\[
 \|\partial_t\Psi_M(x;\cdot)\|_{L^\infty}
 \leq
 A_1A_0+MA_0A_1
 =
 (M+1)A_0A_1.
\]
Applying the same \(L^\infty\)-\(L^1\) argument to
\eqref{eq:trajectory-second-derivative} yields
\[
 \|\partial_t^2\Psi_M(x;\cdot)\|_{L^1}
 \leq
 A_2B_0+2MA_1B_1+M^2A_0B_2.
\]

Since \(A_j\) and \(B_j\) depend only on \(K\), the asymptotic estimates follow.
\end{proof}

\begin{corollary}[First- and second-order decay rates]
\label{cor:spiral-quantization-decay}
Under the assumptions of
Proposition~\ref{prop:spiral-quantization-errors}, consider the quantized representation $f^{(1)}_q$ and $f^{(2)}_q$ of $f$, respectively for the first and second order $\Sigma\Delta$ schemes.
For a fixed bandwidth \(K\), the following asymptotic estimates hold,
\begin{align}
 \|f-f_q^{(1)}\|_{L^\infty(\mathbb T^2)}
 &=
 O(M^{-1})
 =
 O(N^{-1/2}),
 \label{eq:first-order-spiral-final-decay}\\
 \|f-f_q^{(2)}\|_{L^\infty(\mathbb T^2)}
 &=
 O(M^{-2})
 =
 O(N^{-1}).
 \label{eq:second-order-spiral-final-decay}
\end{align}
\end{corollary}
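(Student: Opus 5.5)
The plan is to insert the kernel estimates of Lemma~\ref{lem:trajectory-kernel-estimates} into the pointwise bounds of Theorem~\ref{prop:spiral-quantization-errors} and then take the supremum over $x\in\mathbb T^2$. The key point is that every bound in the lemma is uniform in $x$: the constants $A_j,B_j$ depend only on $K$, and the change of variables \eqref{eq:periodic-change-of-variables} holds for every $x_2$. The right-hand sides of \eqref{eq:first-order-spiral-error-bound} and \eqref{eq:second-order-spiral-error-bound} therefore give bounds on $\|f-f_q\|_{L^\infty(\mathbb T^2)}$ directly. Throughout we use $N=M^2+1$, so that $M\le N^{1/2}$ and $N^{-1}\le M^{-2}$.

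\emph{First-order case.} By \eqref{eq:first-order-spiral-error-bound}, \eqref{eq:trajectory-kernel-Linf} and \eqref{eq:trajectory-first-derivative-L1},
\[
\|f-f_q^{(1)}\|_{L^\infty(\mathbb T^2)}\le \frac{A_0^2+A_1B_0+MA_0B_1}{M^2+1}.
\]
The numerator is $O(M)$, so the quotient is $O(M^{-1})$. Since $M\sim N^{1/2}$, this is $O(N^{-1/2})$.

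\emph{Second-order case.} The filter index $k=\lceil 2/(1-\mu)\rceil$ is fixed independently of $M$, so $k+1$ is a constant. Inserting \eqref{eq:trajectory-kernel-Linf}, \eqref{eq:trajectory-first-derivative-Linf} and \eqref{eq:trajectory-second-derivative-L1} into \eqref{eq:second-order-spiral-error-bound} gives three contributions.
\begin{itemize}
\item The terminal term is at most $(k+1)A_0^2/N$, which is $O(M^{-2})$.
\item The second term is at most $\pi(k+1)\bigl((M+1)A_0A_1+A_2B_0+2MA_1B_1+M^2A_0B_2\bigr)/N^2$. Its numerator is $O(M^2)$ and $N^2\sim M^4$, so it is also $O(M^{-2})$.
\end{itemize}
Adding these, $\|f-f_q^{(2)}\|_{L^\infty(\mathbb T^2)}=O(M^{-2})=O(N^{-1})$.

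There is no real obstacle here. The only points needing care are the uniformity in $x$, which is already built into the lemma, and the observation that $k$ is fixed by $\mu$ and does not grow with $M$. It is also worth remarking that in the second-order case the boundary term $(k+1)A_0^2/N$ and the interior term are of the same order $M^{-2}$. This is why the uncorrected scheme gains nothing beyond this rate.
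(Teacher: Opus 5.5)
Your proposal is correct and is exactly the paper's argument: you insert the $x$-uniform bounds of Lemma~\ref{lem:trajectory-kernel-estimates} into the pointwise estimates of Theorem~\ref{prop:spiral-quantization-errors} and use $N=M^2+1$. One small imprecision: the theorem allows any fixed $k\ge\lceil 2/(1-\mu)\rceil$, not only equality, so the implied constant depends on that fixed $k$ (and on $K$), which is all your argument needs.
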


\begin{proof}
The statement holds by plugging in the estimates in Lemma \ref{lem:trajectory-kernel-estimates} into the error bounds of Theorem \ref{prop:spiral-quantization-errors}.
\end{proof}

\begin{remark}[The global update and its role on the rank-one lattice]
\label{rem:role-global-update}
The global update was introduced in
\cite{GrafKrahmerKrauseSolberg2023} to handle the terminal mismatch produced
by a causal \(\Sigma\Delta\) quantizer on a periodic domain. Although the
samples lie on a circle, the recursion is initialized at one point and runs
causally; hence, the final state does not generally match the initial state.

The idea is to perturb every input sample by the same constant and then rerun
the quantizer. This gives rise to a better approximation of a shifted function, so the order of the error remains the same, but its leading term is now globally constant. This is preferred from the perspective of digital halftoning as changing an image by a small global constant is less disruptive than having a localized error of comparable size. To make this precise for the second-order case with a minimally supported
filter \(h^{(k)}\), as introduced in \eqref{second_order_minimal}, denote by
\(u^{(2)}=g^{(k)}*v^{(2)}\) the associated second-order state satisfying
\(y-q=\Delta^2u^{(2)}\). Starting from the states generated by the original
samples, the first- and second-order updates are defined by
\begin{equation}
\label{global_updates}
 \delta^{(1)}
 :=
 -\frac{u_{N-1}^{(1)}}{N},
 \qquad
 \delta^{(2)}
 :=
 -\frac{(\Delta u^{(2)})_{N-1}}{N}.
\end{equation}
The quantizer is then applied again to
\[
 \widetilde y_n^{(r)}
 :=
 y_n+\delta^{(r)},
 \qquad n=0,\ldots,N-1.
\]
Under the assumptions of
\cite[Theorems~1 and~2]{GrafKrahmerKrauseSolberg2023}, the updated states
satisfy
\[
 \widetilde u_{N-1}^{(1)}=0,
 \qquad
 (\Delta\widetilde u^{(2)})_{N-1}=0.
\]

In the one-dimensional periodic setting of
\cite{GrafKrahmerKrauseSolberg2023}, the second-order terminal contribution is
\(O(N^{-1})\), whereas the interior term is \(O(N^{-2})\). The global update
removes the terminal term and therefore improves the overall decay from
\(O(N^{-1})\) to \(O(N^{-2})\).

The same construction can be applied to the present rank-one lattice because
all \(N=M^2+1\) samples form a single causal trajectory. In particular, only
one constant update is needed for the entire image. This would not be as easy
for independently quantized Cartesian rows, since each row generally
requires a different terminal correction.

For the first-order rank-one lattice scheme, the update cancels
\[
 \frac{u_{N-1}^{(1)}}{N}D_K(x-\gamma_0),
\]
which is \(O_K(N^{-1})=O_K(M^{-2})\). The remaining interior error is
\(O_K(M/N)=O_K(M^{-1})\), so the update does not change the first-order decay
rate.

For the second-order spiral scheme, it cancels
\[
 \frac{(\Delta u^{(2)})_{N-1}}{N}D_K(x-\gamma_0),
\]
which is again \(O_K(N^{-1})=O_K(M^{-2})\). However, the interior contribution
already satisfies
\[
 \frac{1}{N^2}
 \|\partial_t^2\Psi_M(x;\cdot)\|_{L^1}
 =
 O_K\left(\frac{M^2}{N^2}\right)
 =
 O_K(M^{-2}).
\]
Therefore, unlike the one-dimensional result, the update does not improve the order of the error decay.

Nevertheless, our numerical experiments show that there's still a benefit in terms of the reconstruction error, even if it does not reflect in the asymptotic order. 
\end{remark}

\section{Faster error decay for non-squared pixels}
\label{sec:fixed-direction-spirals}

The rank-one lattice described above uses \(M^2+1\) square pixels, and thus the direction of the vector $\begin{pmatrix}
1\\
M
\end{pmatrix}$ generating the lattice changes with the sampling density.  This fact enters directly into the analysis of quantization error, saturating, even after the constant update, its decay rate to $O(1/M)$ or $O(1/M^2)$, respectively, for a first and second order quantization schemes. To obtain a better decay rate, one can instead keep the vector $\begin{pmatrix}
1\\
S
\end{pmatrix}$ generating the lattice fixed and increase only the number of samples along the trajectory. Note that, as a result, the pixels in the halftoned representation of an image will no longer be squares, but rather rectangles, see Figure \ref{uniformvsnonuniform}.

\begin{figure}
    \centering
    \includegraphics[width=0.8\linewidth]{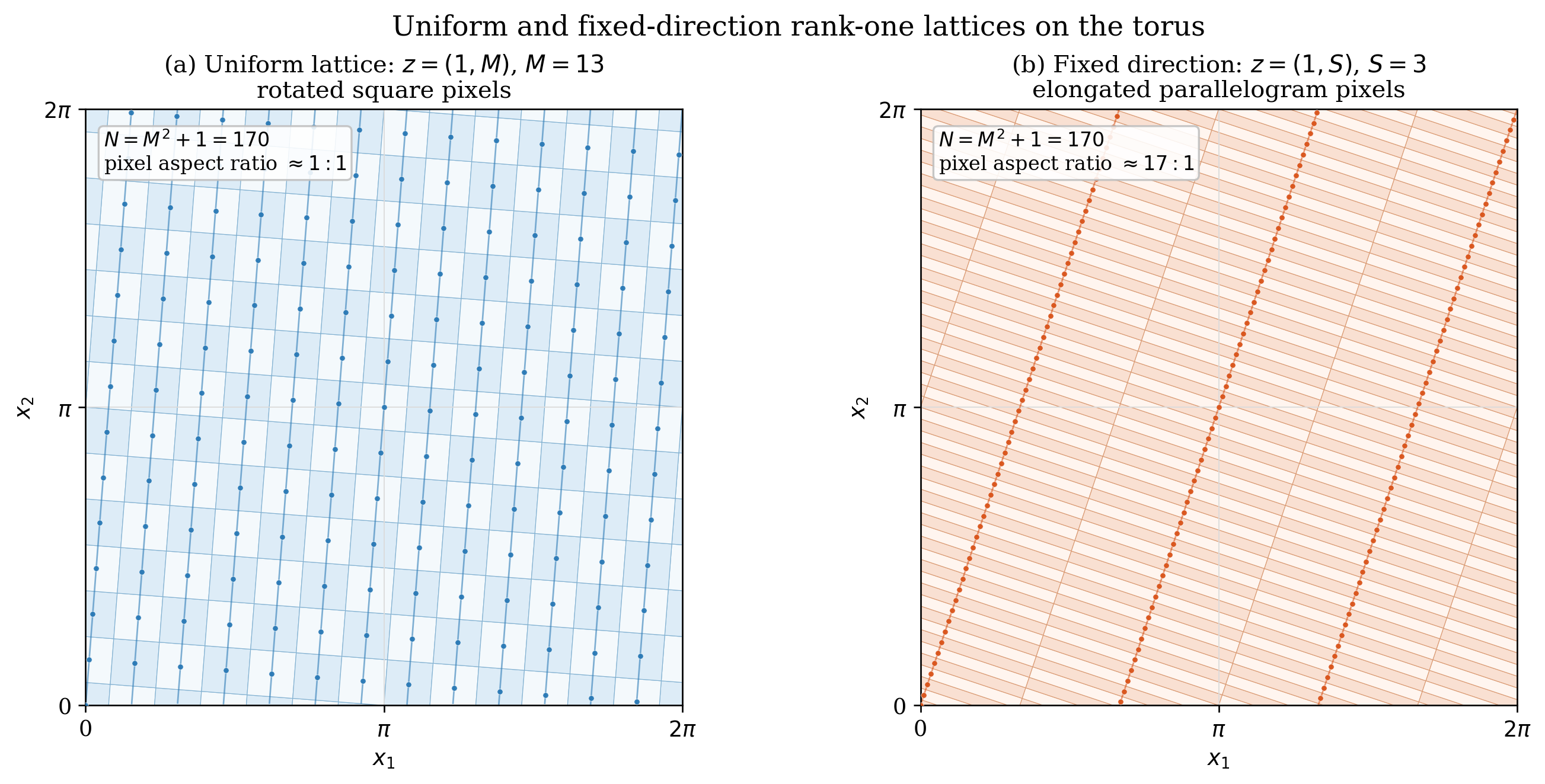}
    \caption{Comparison between the uniform and fixed-direction rank-one
lattices on \(\mathbb T^2\), using the same number
\(N=M^2+1=170\) of sampling points.}
    \label{uniformvsnonuniform}
\end{figure}

More precisely, fix an integer \(S\geq 2K+1\) and, for every
\(M\in\mathbb N\) satisfying $ M^2+1>2K(S+1)$, consider the sampling points
\[
 \gamma_n^{S,M}
 :=
 \left[
 \frac{2\pi n}{M^2+1}
 \begin{pmatrix}
 1\\ S
 \end{pmatrix}
 \right]_{2\pi},
 \qquad n=0,\ldots,M^2.
\]
By the standard injectivity criterion for rank-one lattices
(see, for instance, \cite[Section~2.1]{potts2021approximation}), the condition above ensures that
the map
\[
 k=(k_1,k_2)
 \longmapsto
 k_1+Sk_2
 \pmod{M^2+1}
\]
is injective on \(I_K=[-K,K]^2\cap\mathbb Z^2\). Therefore, every
\(f\in P_K(\mathbb T^2)\) can be reconstructed from the samples
\(f(\gamma_n^{S,M})\) according to
\begin{equation}
\label{eq:fixed-direction-reconstruction-formula}
\begin{aligned}
 f(x)
 &=
 \frac1{M^2+1}
 \sum_{n=0}^{M^2}
 f(\gamma_n^{S,M})
 D_K(x-\gamma_n^{S,M})\\
\end{aligned}
\end{equation}

Following the argument of Proposition
\ref{prop:spiral-quantization-errors}, a \(\Sigma\Delta\) scheme applied to
the ordered samples \(f(\gamma_n^{S,M})\) produces boundary terms of order
\(O(M^{-2})\), independently of whether a first- or second-order scheme is
used. Consequently, without an update, the boundary contribution limits the
decay rate to \(O(M^{-2})\) in both cases. However, if $S$ is fixed, the constant update cancels this
contribution: it preserves the \(O(M^{-2})\) rate for the first-order scheme,
while improving the second-order rate from \(O(M^{-2})\) to \(O(M^{-4})\).
We now state the corresponding quantization-error estimate.

\begin{theorem}[Quantization along a fixed-direction rank-one lattice]
\label{thm:fixed-direction-error-decay}
Let \(f\in P_K(\mathbb T^2)\), $ \|f\|_{L^\infty(\mathbb T^2)}\leq\mu<1.$
Fix an integer \(S\geq2K+1\), and let \(M\in\mathbb N\) be such that $M^2>2K(S+1).$
Consider the samples
\[
 y_n:=f(\gamma_n^{S,M}),
 \qquad n=0,\ldots,M^2-1.
\]

Under the assumptions of
Theorem~\ref{prop:spiral-quantization-errors}, consider a first run of the $\Sigma\Delta$ schemes and let
\(\delta^{(r)}\) be the global updates as in \eqref{global_updates}, and set
\[
 \widetilde y_n^{(r)}:=y_n+\delta^{(r)},
 \qquad r=1,2.
\]

Then the following statements hold.

\medskip
\noindent
\textnormal{\(i\)) \textbf{First-order quantization.}}
Suppose that \(\widetilde y^{(1)}\) is quantized by the first-order scheme
of Theorem~\ref{prop:spiral-quantization-errors}, producing the bit sequence
\(\widetilde q^{(1)}\). Then there exists a constant \(C_{K,S}>0\),
independent of \(M\), such that
\begin{equation}
\label{eq:fixed-direction-first-order-rate}
 \bigl\|f+\delta^{(1)}- f_{\widetilde q}^{(1)}
 \bigr\|_{L^\infty(\mathbb T^2)}
 \leq
 \frac{C_{K,S}}{M^2}.
\end{equation}

\medskip
\noindent
\textnormal{\(ii\)) \textbf{Second-order quantization.}}
Suppose that \(\widetilde y^{(2)}\) is quantized by the corresponding
second-order scheme of
Theorem~\ref{prop:spiral-quantization-errors}, producing the bit sequence
\(\widetilde q^{(2)}\), together with the stability condition $$ \|\tilde{y}^{(2)}\|_{\ell^\infty}+\|h^{(k)}\|_{\ell^1}
 \leq2.$$ Then there exists a constant \(C_{K,S,k}>0\),
independent of \(M\), such that
\begin{equation}
\label{eq:fixed-direction-second-order-rate}
 \bigl\|f+\delta^{(2)}- f_{\widetilde q}^{(2)}
 \bigr\|_{L^\infty(\mathbb T^2)}
 \leq
 \frac{C_{K,S,k}}{M^4}.
\end{equation}
\end{theorem}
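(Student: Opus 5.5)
We follow the scheme of the proof of Theorem~\ref{prop:spiral-quantization-errors}, now with the fixed-direction trajectory kernel
\[
 \Psi_S(x;t):=D_K\bigl(x-(t,St)\bigr)=\varphi^K(x_1-t)\varphi^K(x_2-St),
\]
and with $N:=M^2+1$, $\tau=2\pi/N$, $t_n=n\tau$.

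\textbf{Step 1: representation of the shifted signal.} Since $f+\delta^{(r)}\in P_K(\mathbb T^2)$ (the constant only changes $\widehat f(0)$), the reconstruction formula \eqref{eq:fixed-direction-reconstruction-formula} applies to it. Its samples are exactly $\widetilde y_n^{(r)}$, so
\[
 f+\delta^{(r)}-f^{(r)}_{\widetilde q}
 =\frac1N\sum_{n=0}^{N-1}\bigl(\widetilde y_n^{(r)}-\widetilde q_n^{(r)}\bigr)\psi_n,
 \qquad \psi_n:=\Psi_S(x;t_n).
\]

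\textbf{Step 2: noise shaping and stability of the rerun.} For $r=1$, we have $\widetilde y-\widetilde q=\Delta\widetilde u$. Stability holds because $|\delta^{(1)}|\le 1/N$ and $\mu<1$, so $\|\widetilde y^{(1)}\|_\infty\le1$ once $M$ is large; finitely many small $M$ are absorbed into the constant. This gives $\|\widetilde u\|_\infty\le1$. For $r=2$, the assumed stability condition gives $\|\widetilde v\|_\infty\le1$, hence $\widetilde u=g^{(k)}*\widetilde v$ satisfies $\|\widetilde u\|_\infty\le (k+1)/2$.

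\textbf{Step 3: summation by parts with the terminal term removed.} The summation-by-parts identities \eqref{eq:first-order-sbp} and \eqref{eq:second-order-sbp} apply with $\widetilde u$ in place of $u$. We then invoke, as in Remark~\ref{rem:role-global-update}, the cited results \cite[Theorems~1 and~2]{GrafKrahmerKrauseSolberg2023}, which give $\widetilde u^{(1)}_{N-1}=0$ and $(\Delta\widetilde u^{(2)})_{N-1}=0$. This kills the $O(1/N)\|\Psi_S\|_\infty$ terms.

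For $r=1$ only the interior sum remains, bounded by $\frac1N\|\partial_t\Psi_S(x;\cdot)\|_{L^1}$. For $r=2$ what remains is
\[
 \frac1N|\widetilde u_{N-2}|\,|\psi_{N-2}-\psi_{N-1}|+\frac1N\sum_{n}|\widetilde u_n|\,|\psi_n-2\psi_{n+1}+\psi_{n+2}|
 \le \frac{\pi(k+1)}{N^2}\bigl(\|\partial_t\Psi_S\|_{L^\infty}+\|\partial_t^2\Psi_S\|_{L^1}\bigr).
\]
The interior term is already $O(N^{-2})$, but the second boundary term is only $O(N^{-2})$, not $O(N^{-3})$. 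That is acceptable, since $N^{-2}=O(M^{-4})$ is exactly the target rate.

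\textbf{Step 4: kernel norms with fixed $S$.} We repeat Lemma~\ref{lem:trajectory-kernel-estimates} with $M$ replaced by $S$. This gives $\|\partial_t\Psi_S\|_{L^1}\le A_1B_0+SA_0B_1$, $\|\partial_t\Psi_S\|_{L^\infty}\le(S+1)A_0A_1$, and $\|\partial_t^2\Psi_S\|_{L^1}\le A_2B_0+2SA_1B_1+S^2A_0B_2$. All of these depend only on $K$ and $S$, uniformly in $x$. Hence the first-order error is at most $C_{K,S}/N\le C_{K,S}/M^2$, and the second-order error is at most $C_{K,S,k}/N^2\le C_{K,S,k}/M^4$.

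\textbf{Main obstacle.} The delicate step is the terminal cancellation $\widetilde u_{N-1}=0$, respectively $(\Delta\widetilde u)_{N-1}=0$. The argument of \cite{GrafKrahmerKrauseSolberg2023} uses that the rerun produces the same bits, $\widetilde q=q$, so that the state shifts exactly by the partial sums of the constant $\delta$. This must be checked for the present sequence; I would first try to verify it directly via the sign-rule margins under the stated stability assumption. Failing that, I would simply import the cited theorems' hypotheses as the ``assumptions'' the statement refers to.

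A second subtlety is in the second-order update. If the bits do not change, then $\widetilde y-\widetilde q=(y-q)+\delta$, so $\Delta^2\widetilde u=\Delta^2u+\delta$. Consequently $(\Delta\widetilde u)_{N-1}=(\Delta u)_{N-1}+N\delta$, and the choice of $\delta^{(2)}$ makes this vanish. Meanwhile $\widetilde u$ grows by at most $O(N^2\delta)=O(N)$ only in principle. So I would need the cited result's bound on $\widetilde u$ rather than the crude $\|g\|_{\ell^1}\|\widetilde v\|_\infty$. Here the assumed stability of the rerun (bounded $\widetilde v$) is what guarantees $\|\widetilde u\|_\infty\le(k+1)/2$ independently of that heuristic.
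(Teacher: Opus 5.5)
Your main line is the paper's proof. The paper likewise replaces $\Psi_M$ by $\Psi_S$, reuses the summation by parts, imports $\widetilde u_{N-1}=0$ and $(\Delta\widetilde u)_{N-1}=0$ from \cite{GrafKrahmerKrauseSolberg2023}, and relies on all kernel norms depending only on $K$ and $S$. Part $i)$ does not even need the cancellation: the terminal term $\frac1N|\widetilde u_{N-1}|\,\|\Psi_S(x;\cdot)\|_{L^\infty}\le A_0^2/N$ is already $O(M^{-2})$.

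The mechanism you propose for the cancellation is wrong, and your last paragraph contradicts itself. Suppose the rerun reproduced the bits. Then $\widetilde u_{N-1}=u_{N-1}-\frac{N+1}{2}(\Delta u)_{N-1}$. This violates the stability bound $\|\widetilde u\|_{\ell^\infty}\le\|g^{(k)}\|_{\ell^1}\|\widetilde v\|_{\ell^\infty}\le\frac{k+1}{2}$ unless $|(\Delta u)_{N-1}|\le\frac{2(k+1)}{N+1}$, i.e.\ unless the update was unnecessary. So under the stated stability hypothesis the bits do change. You cannot use ``same bits'' and ``bounded $\widetilde u$'' together, and checking sign margins will not rescue this.

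The cancellation instead follows from parity.
\begin{itemize}
\item Sum $\Delta^2\widetilde u=\widetilde y-\widetilde q$ over $n=0,\dots,N-1$ with zero initial data, and use $\sum_n\widetilde y_n=\sum_n y_n-(\Delta u)_{N-1}=\sum_n q_n$. This gives $(\Delta\widetilde u)_{N-1}=\sum_n q_n-\sum_n\widetilde q_n$.
\item This is an even integer, because both bit sums are $\equiv N\pmod 2$.
\item With $\operatorname{sign}(0)=1$ and $\|\widetilde y\|_{\ell^\infty}+\|h^{(k)}\|_{\ell^1}\le2$, induction on the recursion gives $\widetilde v_n\in[-1,1)$ for all $n$. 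Indeed, $(h^{(k)}*\widetilde v)_n$ then lies strictly inside $(-\|h^{(k)}\|_{\ell^1},\|h^{(k)}\|_{\ell^1})$.
\item Since $\Delta g^{(k)}=\delta_0-\frac1k\sum_{j=1}^k\delta_j$, you get $(\Delta\widetilde u)_{N-1}=\widetilde v_{N-1}-\frac1k\sum_{j=1}^k\widetilde v_{N-1-j}\in(-2,2)$.
\end{itemize}
Hence $(\Delta\widetilde u)_{N-1}=0$. The same argument in the first-order case, where $|\widetilde y_n|\le1$ gives $\widetilde u_n\in[-1,1)$, yields $\widetilde u^{(1)}_{N-1}=\sum_nq_n-\sum_n\widetilde q_n\in2\mathbb Z\cap[-1,1)=\{0\}$, with the first-order bits. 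This proves the step you flagged as the main obstacle instead of importing it, and the rest of your argument then goes through unchanged.
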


\begin{proof}
The proof follows exactly the argument used for the uniform rank-one lattice, with generator $\begin{pmatrix}
1\\
M
\end{pmatrix}$ and
\(\Psi_M\) replaced by
\(\Psi_S(x;t)=\varphi^K(x_1-t)\varphi^K(x_2-St)\). It also uses the fact that the quantization on the updated samples yields $\tilde{u}_{N-1}=0$ and $\Delta\tilde{u}_{N-1} =0$, respectively for the first and the second order schemes, see \cite{GrafKrahmerKrauseSolberg2023} for details.
\end{proof}

\section{Numerical experiments}
\label{sec:numerical-experiments}

We compare the Cartesian weighted second-order scheme associated with the scheme 
$A_{3,3}$ (see \cite{KrahmerVeselovska2023}) with the second-order schemes applied along the uniform rank-one
lattice and along the fixed direction.  In both rank-one lattice constructions,
the global update is applied before the final reconstruction.  We take $K=5$
and $M=256$; for the fixed-direction construction, we choose
$S=11=2K+1$, which satisfies the reconstruction condition.

More precisely, on the Cartesian grid, we use the weighted scheme $A_{3,3}$,
implemented as
\[
\begin{aligned}
 a_{m,n}
 &:={y}_{m,n}
 +\frac12\left(\frac43v_{m,n-1}-\frac13v_{m,n-4}\right)
 +\frac12\left(\frac43v_{m-1,n}-\frac13v_{m-4,n}\right),\\
 q_{m,n}&:=\operatorname{sign}(a_{m,n}),
 \qquad u_{m,n}:=a_{m,n}-q_{m,n},
\end{aligned}
\]
with initial conditions initialized to $0$.  Along each rank-one lattice, the
ordered samples are quantized using the same minimally supported filter
\[
 h^{(3)}=\frac43\delta_1-\frac13\delta_4,
 \qquad \|h^{(3)}\|_{\ell^1}=\frac53,
\]
and all test functions are normalized so that the
stability condition
\[
 \|y\|_{\ell^\infty}+|\delta|+
 \|h^{(3)}\|_{\ell^1}\leq2
\]
holds.
For $N$ samples on the trajectory, we first run this
recursion on $(y_n)_{n = 0}^{N-1}$ and record 
\[
 \delta=-\frac{(\Delta u)_{N-1}}{N},
\]
then we rerun it on the updated samples $y_n+\delta$.  The reconstruction obtained
from the updated bits is finally corrected by subtracting $\delta$.  This
procedure is used for both the uniform and the fixed-direction rank-one lattice.

\subsection{Average distribution of the error}
We first examine the spatial localization of the error over an ensemble of
$400$ randomly generated bandlimited functions.  At each point $x\in\mathbb
T^2$, we compute the pointwise ensemble RMS error
\[
 E_{\mathrm{RMS}}(x)
 :=
 \left(
 \frac1{400}\sum_{j=1}^{400}
 \bigl|f_j(x)-f_{q,j}(x)\bigr|^2
 \right)^{1/2}.
\]

After centering the torus for
visualization, one can see in Figure \ref{fig:ensemble-error-comparison} that $A_{3,3}$ produces two pronounced ridges along the row and column
where its recursions are initialized. Instead, the error of the updated
rank-one lattices is distributed without a preferred row or column. However, before the
update, the quantized rank-one values, in both the uniform and fixed direction lattices, produce an error near the closing point of the trajectory, compatible with the fact that the initial boundary term has not been eliminated. Observe that, as expected from theory, the approximation via the fixed direction rank-one lattice yields smaller reconstruction errors.

\begin{figure}[H]
    \centering
    \includegraphics[width=1\linewidth]{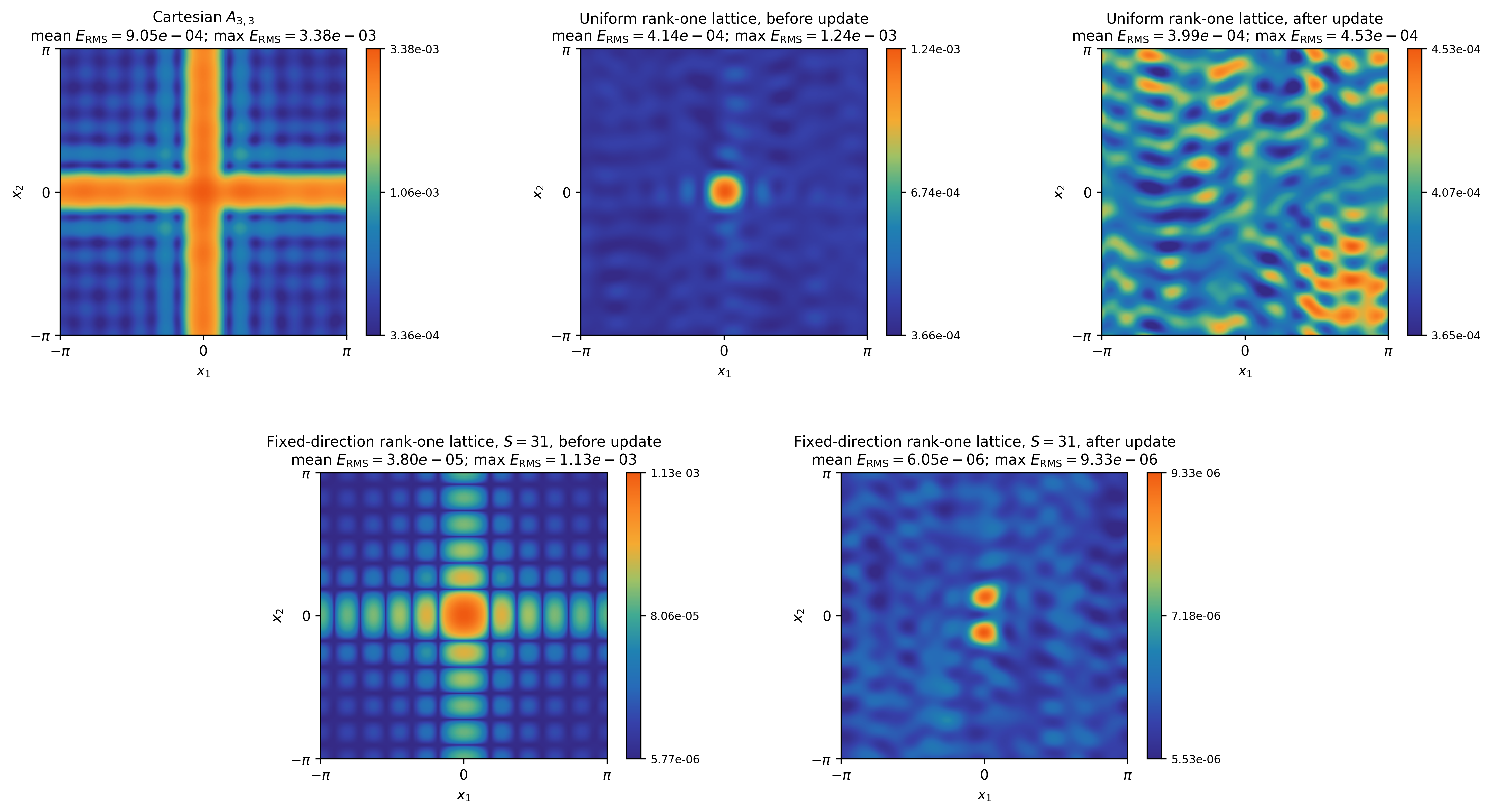}
    \caption{Pointwise ensemble RMS reconstruction error for the $A_{3,3}$ scheme, the uniform rank-one lattice before and after the constant update, as well as for the fixed direction rank-one lattice.  The error functions are shifted such that the $0$ is at the center of the pictures.}
    \label{fig:ensemble-error-comparison}
\end{figure}

\subsection{Comparison on a single function}
We next consider one nonconstant function from the same bandlimited test
family.  Figure~\ref{fig:single-function-five-errors} shows the pointwise
absolute reconstruction errors in log scale and for the different schemes.  The Cartesian reconstruction retains the
axis-aligned cross, whereas neither spiral reconstruction exhibits ridges
extending across the torus. Overall, the error for the rank-one lattice schemes, the constant update reduces the reconstruction error of the updated signals. In particular, for the rank-one lattice with fixed direction, the boundary artifact in the origin is eliminated in the updated scheme.

\begin{figure}[H]
    \centering
    \includegraphics[width=0.5\linewidth]{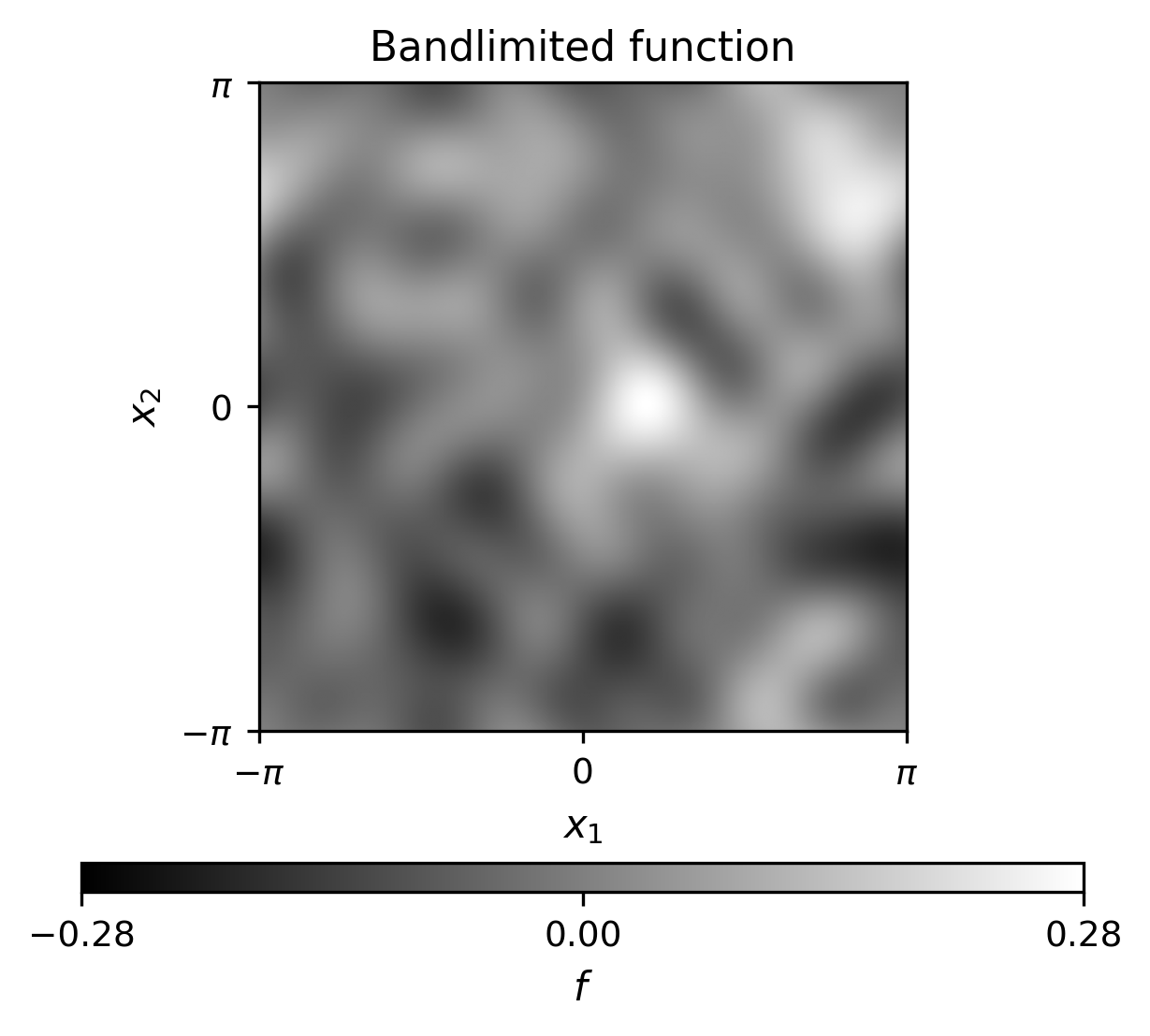}
    \caption{A real-valued bandlimited test function and the corresponding
    absolute reconstruction errors. The function is a randomly generated
    trigonometric polynomial in \(P_5(\mathbb T^2)\), with conjugate-symmetric
    Fourier coefficients supported on \([-5,5]^2\) and decreasing spectral
    weights. It is rescaled so that
    \(\|f\|_{L^\infty(\mathbb T^2)}=0.28\).}
    \label{fig:placeholder}
\end{figure}

\begin{figure}[H]
    \centering
    \includegraphics[width=\textwidth]{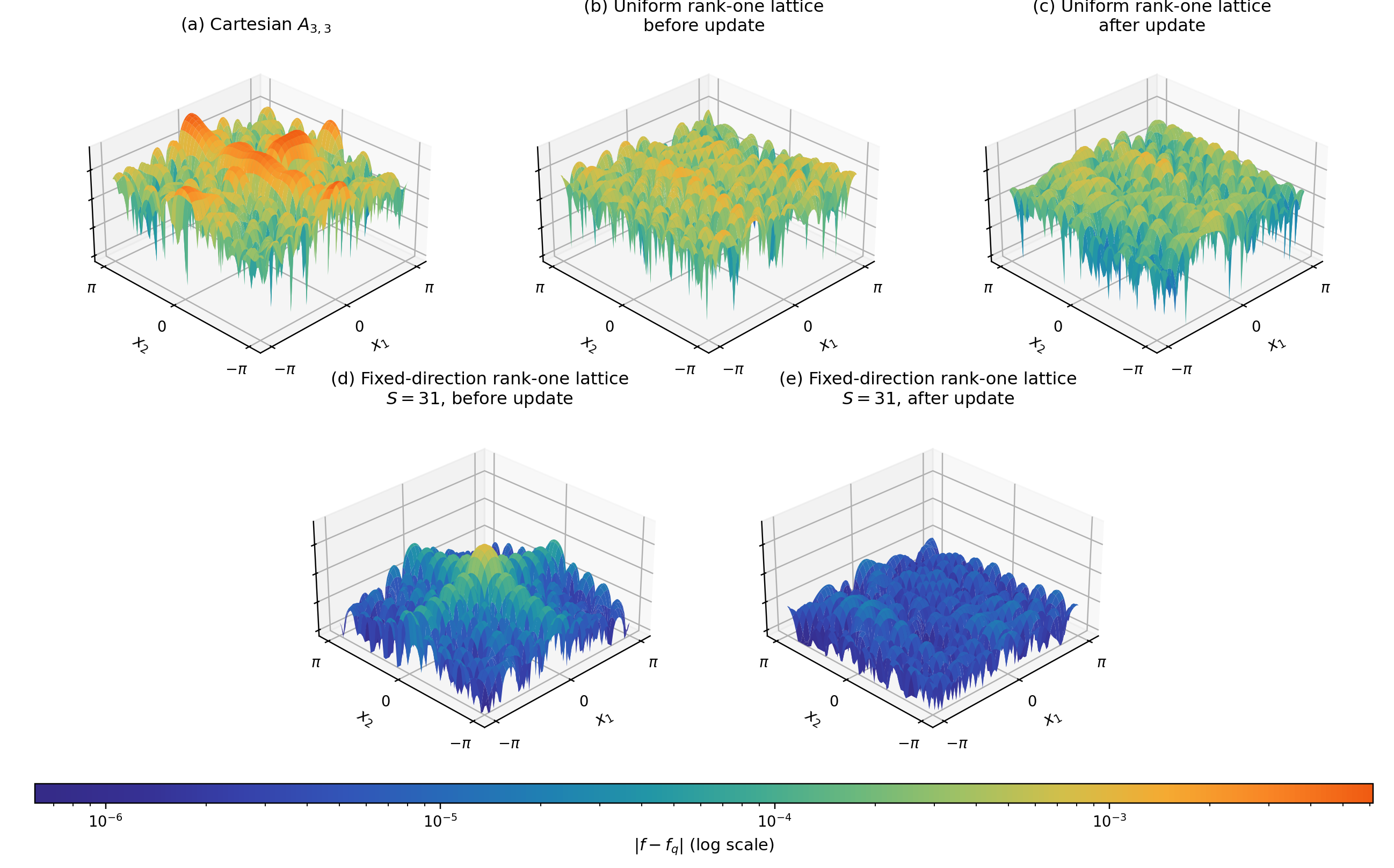}

    \caption{Absolute
    reconstruction errors obtained with the Cartesian weighted scheme
    \(A_{3,3}\) and with the uniform and fixed-direction rank-one lattices,
    before and after the global update. We use \(M=256\), the minimally
    supported second-order filter \(h^{(3)}\), and \(S=31\) for the
    fixed-direction lattice. A periodic shift moves the boundary to the
    center of each error surface. All five surfaces share the same
    logarithmic vertical and color scales.}
    \label{fig:single-function-five-errors}
\end{figure}

\subsection{An example of a halftoned periodic image}
\label{sec:flower-a88-example}

We consider the Shepp–Logan phantom, which is a periodic image, and we scale its gray values (in $[0,255]$)  such that after it is mapped to
$[-1,1]$ by $f=2I/255-1$, the resulting range is contained in
\((-1/2,1/2)\) so that no instability can arise due to large signal amplitude. The high-resolution periodic image is treated as a
continuous input through periodic bilinear interpolation, applying a
preliminary low-pass filter. We take $M=250$ and reconstruct the image using three different methods. The Cartesian experiment uses the
weighted scheme $A_{4,4}$ from \cite{KrahmerVeselovska2023}, i.e., the filter $h^{(4)}=\frac{5}{4}\delta_1-\frac{1}{4}\delta_5$ is applied on both the vertical and horizontal direction with both weights $1/2$ and $0$ initialization. The same filter $h^{(4)}$ is also used for both the rank-one experiments. Moreover, the uniform rank-one lattice has $N=M^2+1$ points and a generator
$(1,M)$, while the fixed-direction lattice has $N=M^2+1$ points and
generator $(1,S)$, with $S=126$. These choices satisfy the stability
condition for the second-order quantizers used here.

For display only, only the halftoned images are translated periodically by
$(\pi,\pi)$, so that the origin and the two torus seams appear at the
center. This translation changes neither the bit sequences nor the
reconstructions. 

In the rank-one panels, each bit is instead extended over its exact
fundamental rectangle and the cells are rasterized at resolution $4M\times
4M$, rather than assigned to the nearest Cartesian pixel. Figure \ref{fig:flower-halftones} shows that all the discretizations lead to good halftoned images; however, for the scheme $A_{4,4}$, one indeed observes line artifacts along the horizontal and vertical initialization boundaries. As expected, for the uniform rank-one scheme, this effect is no longer visible, but the general appearance is slightly coarser. For the fixed direction rank-one lattice with rectangular pixels, not only are the boundary effects removed, but also the shading of the grey areas looks more uniform.

\begin{figure}[t]
    \centering
    \begin{subfigure}[t]{0.315\textwidth}
        \centering
        \includegraphics[width=\linewidth]{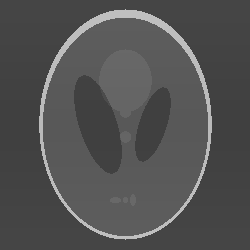}
        \caption{Normalized input.}
    \end{subfigure}\hfill
    \begin{subfigure}[t]{0.315\textwidth}
        \centering
        \includegraphics[width=\linewidth]{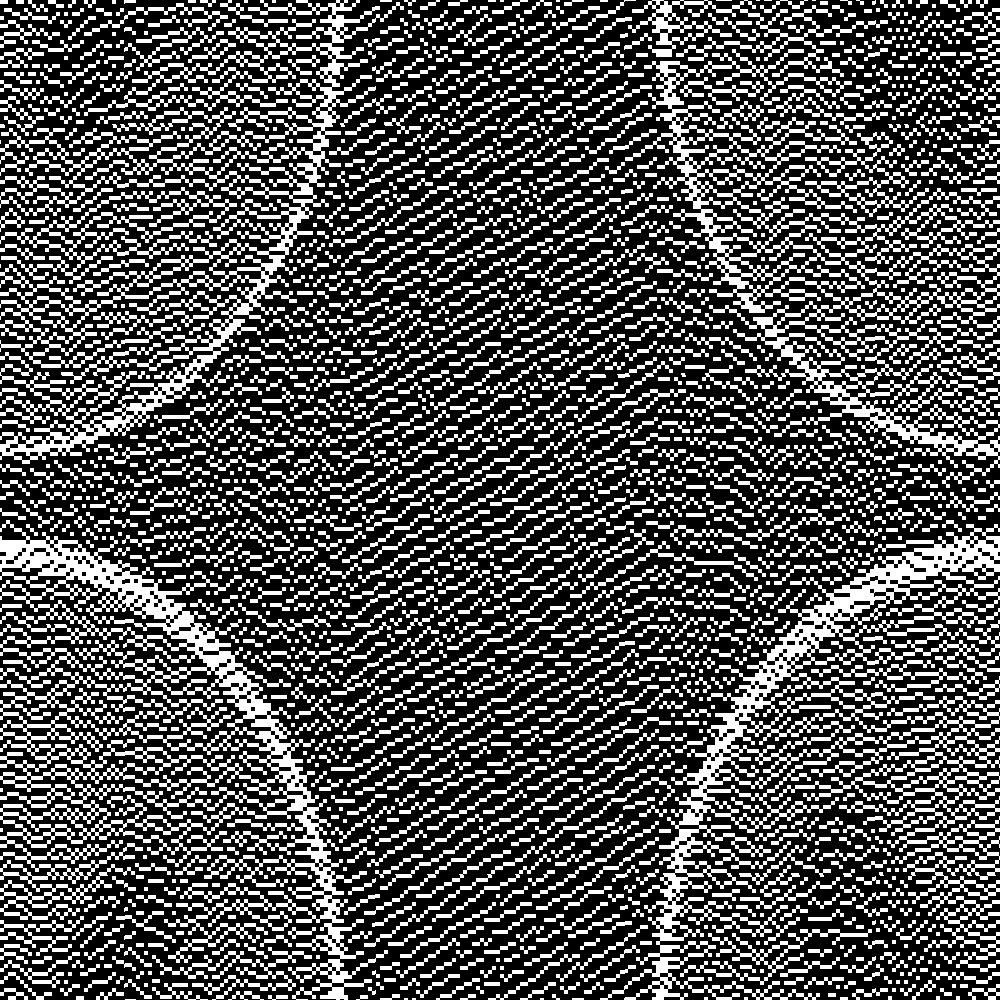}
        \caption{Uniform, before update.}
    \end{subfigure}\hfill
    \begin{subfigure}[t]{0.315\textwidth}
        \centering
        \includegraphics[width=\linewidth]{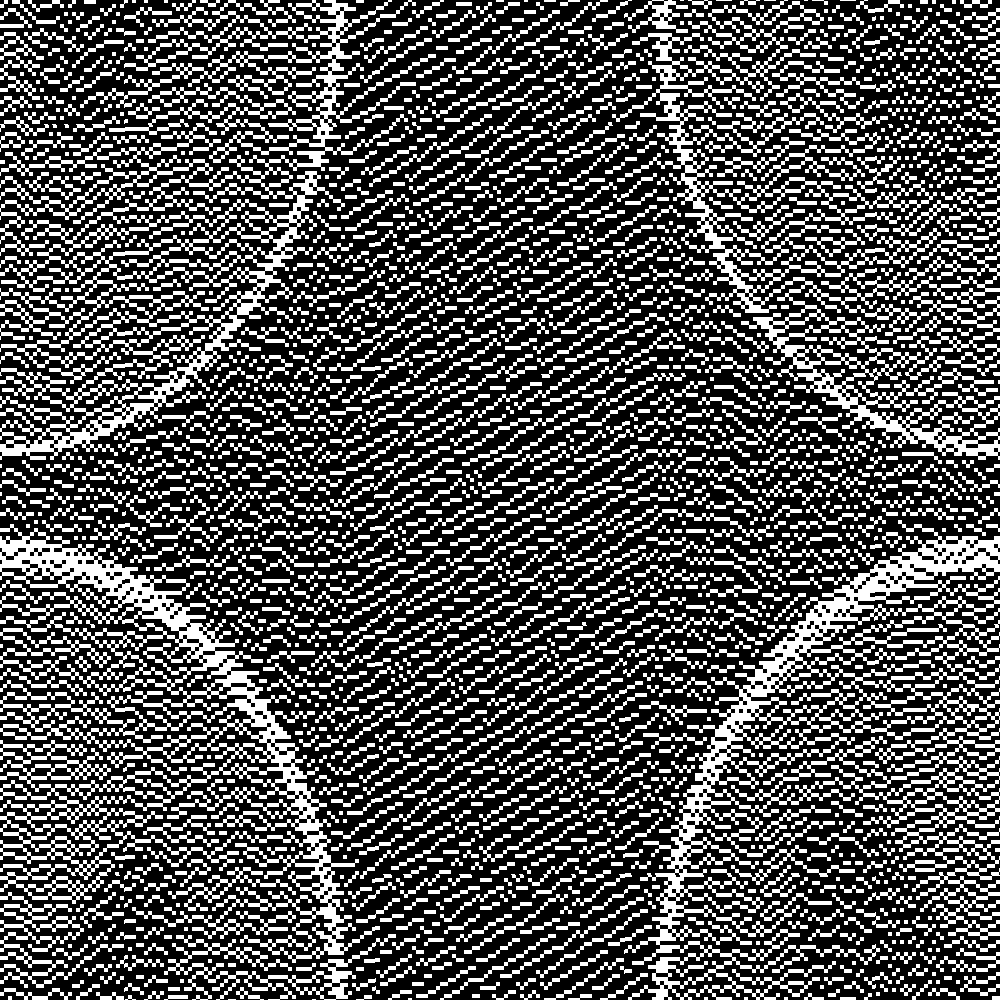}
        \caption{Uniform, after update.}
    \end{subfigure}

    \vspace{0.45em}

    \begin{subfigure}[t]{0.315\textwidth}
        \centering
        \includegraphics[width=\linewidth]{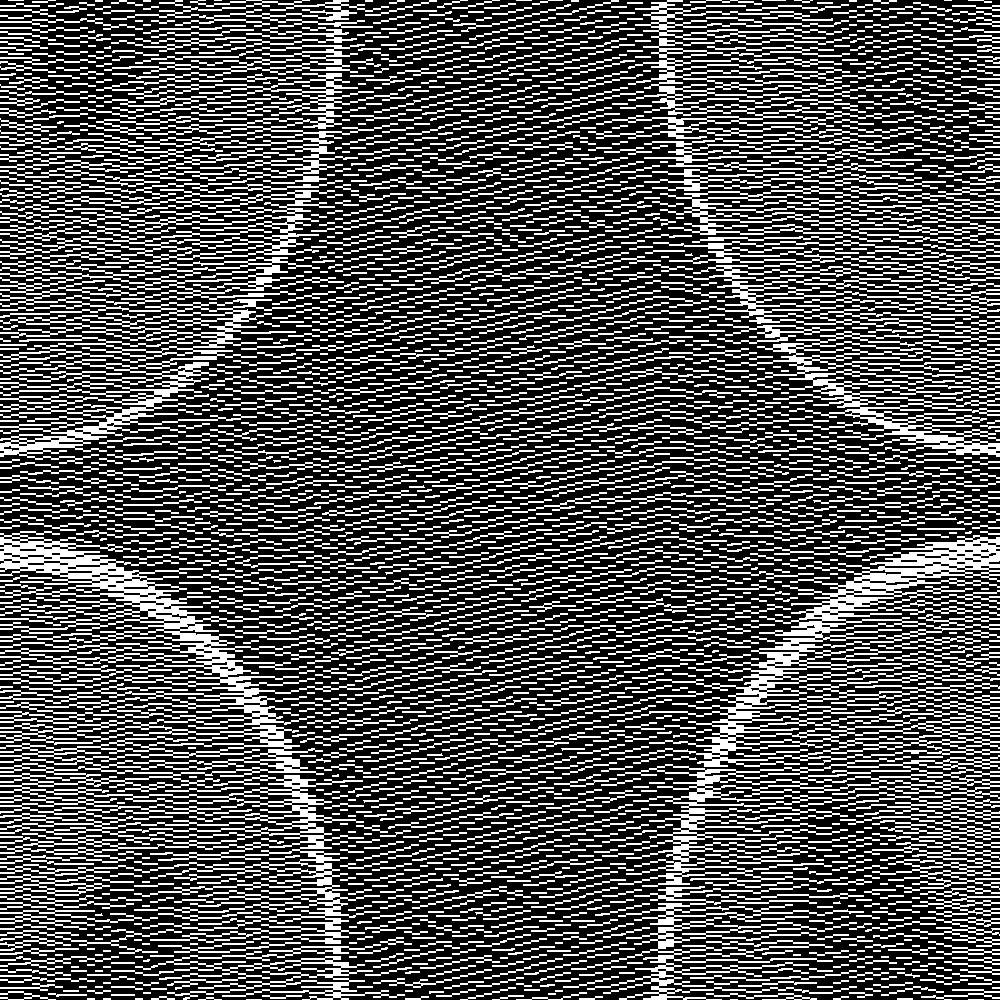}
        \caption{Fixed direction, before update.}
    \end{subfigure}\hfill
    \begin{subfigure}[t]{0.315\textwidth}
        \centering
        \includegraphics[width=\linewidth]{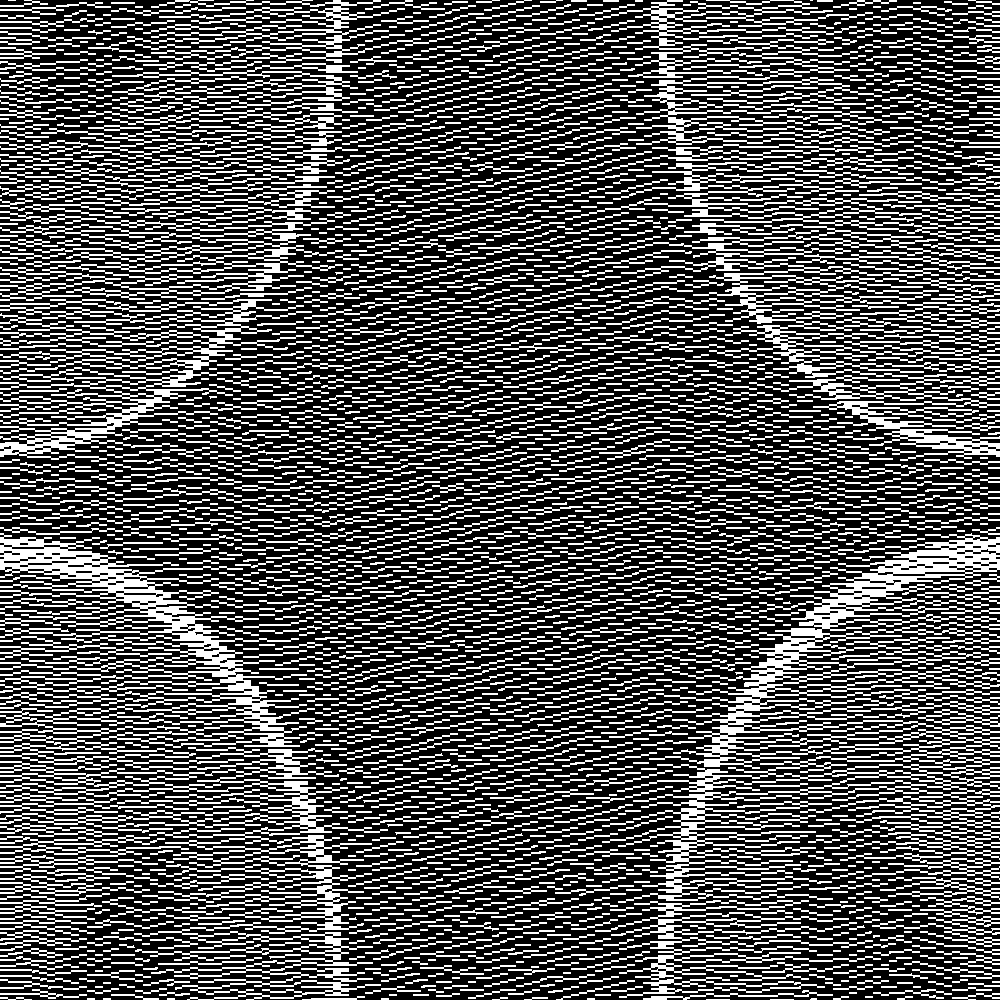}
        \caption{Fixed direction, after update.}
    \end{subfigure}\hfill
    \begin{subfigure}[t]{0.315\textwidth}
        \centering
        \includegraphics[width=\linewidth]{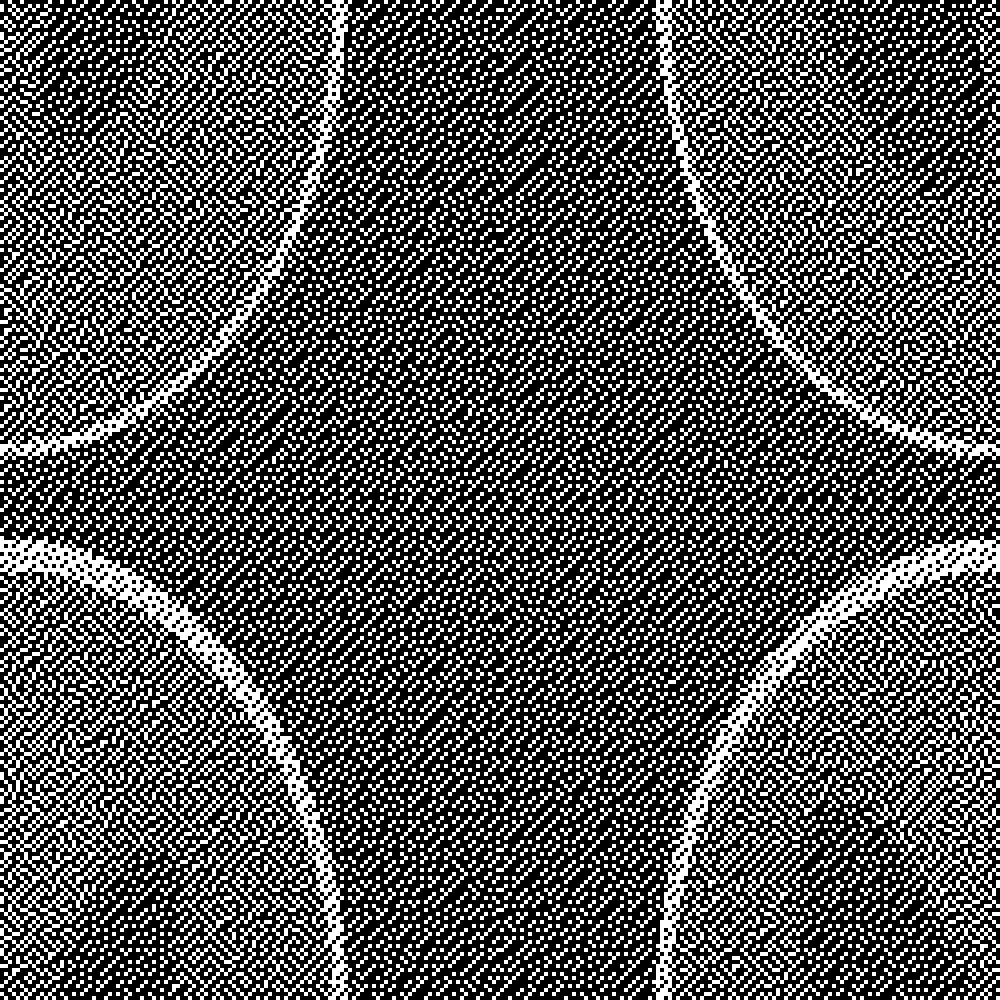}
        \caption{$A_{4,4}$.}
    \end{subfigure}
    
    \caption{The normalized periodic input, the four rank-one halftoned images,
    and the $A_{4,4}$ halftoned image. Both rank-one schemes use $h^{(4)}$.
    The rank-one bits are displayed on fundamental parallelogram cells, and
    the torus origin is shifted to the center.}
    \label{fig:flower-halftones}
\end{figure}

\section{Conclusions and outlook}
In this work, we studied the boundary mismatch produced by Sigma–Delta schemes on the
two-dimensional torus. By arranging the samples along a single closed rank-one lattice,
we replaced the multiple boundary contributions arising from separately initialized rows
and columns by a single terminal contribution, while preserving exact reconstruction of
bandlimited functions. For the uniform lattice, we obtained first- and second-order error
bounds of order $N^{-1/2}$ and $N^{-1}$
, respectively. A suitable global update eliminates the
terminal mismatch and reduces the spatial localization of the error. For fixed-direction lattices, compensating for the constant update leads to the improved rates $N^{-1}$ and $N^{-2}$, at the cost of a less uniform sampling geometry. The numerical experiments support the
reduction of boundary artifacts predicted by this construction.
A natural direction for future work is to extend this idea to other closed domains
and manifolds. The general objective is to identify suitable sampling curves or orderings
that allow standard one-dimensional Sigma–Delta schemes to be applied while remaining
compatible with the geometry and reconstruction theory of the underlying domain. Further
questions include the extension to higher-order Sigma–Delta schemes, the design of schemes
for the classical Cartesian sampling, more general stable feedback filters, and broader signal
models beyond bandlimited functions.

\bibliographystyle{unsrt}
\bibliography{references}

\end{document}